\documentclass[12pt]{article}

\usepackage{amssymb,amsmath,amsfonts,eurosym, geometry,ulem, color,setspace,sectsty, comment,footmisc, pdflscape, array,hyperref}
\usepackage[numbers,square,longnamesfirst]{natbib}
\usepackage{subcaption}
\usepackage{graphicx}
\usepackage{amsmath}
\usepackage{amsthm}
\usepackage{amssymb}
\usepackage{multirow}
\usepackage{url}
\usepackage{authblk}
\usepackage{enumitem}
\usepackage{algorithm}
\usepackage{algorithmic}
\usepackage{booktabs}

\usepackage[percent]{overpic}  

\usepackage{lineno}

\newtheorem{Theorem}{Theorem}

\newcolumntype{L}[1]{>{\raggedright\let\newline\\arraybackslash\hspace{0pt}}m{#1}}
\newcolumntype{C}[1]{>{\centering\let\newline\\arraybackslash\hspace{0pt}}m{#1}}
\newcolumntype{R}[1]{>{\raggedleft\let\newline\\arraybackslash\hspace{0pt}}m{#1}}

\begin{document}

\begin{titlepage}
\title{Adaptive Epidemic Dynamics on Hypergraphs with Group-Level Immunization and Rewiring\thanks{This work is supported in part by the Natural Science Foundation of Chongqing under Grant CSTB2025YITP-QCRCX0007, in part by the Fundamental Research Funds for the Central Universities under Grant SWU-KT26010, and in part by Slovenian Research and Innovation Agency under Grant P1-0403.}}

\author[1]{Yusheng Li}
\author[2]{Meiling Xie}
\author[3]{Qin Li\thanks{Corresponding author. Email: \texttt{qinli1022@swu.edu.cn}}}
\author[1]{Minyu Feng\thanks{Corresponding author. Email: \texttt{myfeng@swu.edu.cn}}}
\author[4]{Matja\v{z} Perc}

\affil[1]{College of Artificial Intelligence, Southwest University, Chongqing 400715, China}
\affil[2]{School of Systems Science/Institute of Nonequilibrium Systems, Beijing Normal University, 100875 Beijing, China}
\affil[3]{Business College, Southwest University, Chongqing 400715, China}
\affil[4]{Faculty of Natural Sciences and Mathematics, University of Maribor, Koro{\v s}ka cesta 160, Maribor, 2000, Slovenia}
\affil[4]{Community Healthcare Center Dr. Adolf Drolc Maribor, Ulica talcev 9, Maribor, 2000, Slovenia}
\affil[4]{Department of Physics, Kyung Hee University, 26 Kyungheedae-ro, Seoul, 02447, Dongdaemun-gu, Republic of Korea}
\affil[4]{University College, Korea University, 145 Anam-ro, Seoul, 02841, Seongbuk-gu, Republic of Korea}

\date{}

\maketitle
\begin{abstract}
\noindent Understanding how higher-order social structures shape epidemic spreading requires models that couple group interactions with adaptive behavior. We introduce an adaptive simplicial susceptible–infected–susceptible ($s$-SIS) model on $d$-uniform hypergraphs, where both node states and hyperedge activity co-evolve in response to local infection pressure. Hyperedges represent group interactions of fixed size and dynamically reduce their activity through a feedback mechanism in highly infected environments. Within this framework, we design two classes of hyperedge-level interventions: (i) risk-driven immunization, combining spontaneous, activity-based isolation with targeted deactivation guided by hyperedge infection pressure, and (ii) structural rewiring, which reconstructs group structures either randomly or via degree-preferential attachment. By extending the microscopic Markov chain approximation to higher-order interactions, we derive analytical conditions for the existence and stability of both endemic and disease-free stationary states. Our analysis shows that adaptive hyperedge feedback can induce discontinuous phase transitions, nonlinear epidemic thresholds, and bistable regimes in which sufficiently high initial prevalence drives the system to a disease-free equilibrium. Extensive Monte Carlo simulations support the theory and confirm that targeted immunization and degree-preferential rewiring substantially suppress epidemic prevalence, outperforming random strategies. These results demonstrate that higher-order interactions and adaptive group-level responses fundamentally reshape epidemic bifurcations and suggest principles for designing effective intervention policies in complex social systems.
\vspace{0in}\\
\noindent\textbf{Keywords:} Adaptive hypergraph, Mathematical epidemiology, Microscopic Markov chain, Epidemic dynamics, Immunization strategy\\
\bigskip
\end{abstract}
\setcounter{page}{0}
\thispagestyle{empty}
\end{titlepage}
\pagebreak \newpage

\doublespacing

\section{Introduction}

Modeling infectious disease transmission has long been a central topic in mathematical epidemiology and complex systems science. 
Early compartmental models~\cite{anderson1992infectious,vespignani2012modelling}, based on homogeneous mixing assumptions, employed systems of differential equations to describe the time evolution of classical susceptible–infected–recovered (SIR) and susceptible–infected–susceptible (SIS) dynamics~\cite{tang2025sis,li2022network}. 
These frameworks laid the mathematical foundations for key epidemiological indicators such as the basic reproduction number $R_0$~\cite{xie2023contact} and have guided public health interventions ranging from smallpox eradication to seasonal influenza control~\cite{ferguson2003planning,casagrandi2006sirc}. 
However, the assumption of uniform mixing fails to capture the heterogeneous contact structures of real populations, where interaction networks shape epidemic outcomes. 
	
The emergence of network epidemiology~\cite{pastor2015epidemic} revealed that network topology profoundly modulates disease dynamics. 
Short path lengths in small-world networks accelerate transmission~\cite{watts1998collective}, while highly connected hubs in scale-free networks can sustain infection even at infinitesimal transmission rates~\cite{barabasi1999emergence,pastor2001epidemic}. 
Subsequent advances incorporated multilayer~\cite{boccaletti2014structure,feng2023impact} and metapopulation~\cite{colizza2007modeling,gomez2018critical} frameworks, enabling the modeling of spatial heterogeneity, mobility patterns, and interdependent spreading processes~\cite{soriano2018spreading}. 
These developments established a structural foundation for understanding how network organization controls epidemic thresholds, prevalence levels, and intervention efficiency. 
	
Most network-based models originally assumed fixed topologies, ignoring the feedback between individual behavior and epidemic progression. 
In reality, humans adapt their social interactions dynamically in response to perceived risk, information awareness, or behavioral constraints. 
This coevolutionary perspective gave rise to adaptive networks~\cite{gross2008adaptive, Feng10521680}, where link formation and removal depend on the evolving disease state of nodes~\cite{jolad2012epidemic}. 
Such feedback can induce complex dynamical phenomena absent in static structures, including bistability, hysteresis, and self-organized oscillations~\cite{gross2006epidemic}. 
For example, the rewiring of high-risk contacts has been shown to delay or even suppress outbreaks~\cite{berner2023adaptive}, but excessive adaptation can also fragment the network, altering percolation and resilience properties~\cite{ally2018effects}. 
These findings underscore that epidemic control cannot be fully understood without accounting for the joint evolution of topology and dynamics.

Network structure fundamentally determines epidemic controllability, motivating extensive studies of immunization and intervention strategies~\cite{jimenez2003epidemic,pastor2002epidemic}. 
At the microscopic level, individuals may spontaneously self-isolate or reduce contact frequency, while organized interventions at the macroscopic level (e.g., targeted vaccination or contact restrictions) remove nodes or edges to interrupt transmission pathways.~\cite{pastor2002immunization}. 
The fraction of immunized nodes or edges required to eliminate endemic persistence defines the Herd Immunity Threshold (HIT)~\cite{john2000herd}.
Due to the inefficiency of random immunization in heterogeneous networks, structural information has been leveraged to develop targeted methods, including degree-based vaccination~\cite{holme2002attack}, acquaintance immunization~\cite{cohen2003efficient}, and influence-based strategies~\cite{chen2008finding}. 
Beyond node-level interventions, recent studies have demonstrated that selectively removing or immunizing edges based on network topology can substantially enhance containment efficiency. 
For instance, Liu et al.~\cite{liu2024diffusion} proposed an explosive percolation-based edge removal strategy that selectively disconnects low-degree contacts to curb misinformation spread, while a subsequent work~\cite{liu2026efficient} combined greedy optimization with topological features to develop computationally scalable edge immunization schemes, showing that jointly optimizing local connectivity patterns and global structure yields superior performance. 
Recent work further emphasizes group-level targeting in realistic contact contexts such as households, schools, and workplaces, highlighting the need for intervention models that operate beyond pairwise links~\cite{jhun2021effective}.
In parallel, the microscopic Markov chain approach has been applied to analyze multilayer behavioral contagion~\cite{MB2025green}.

Traditional network representations capture only pairwise interactions, yet real contagion processes often unfold within groups, such as meetings, classrooms, or social gatherings, where multiple individuals interact simultaneously~\cite{battiston2020networks,wang2024epidemic}. 
Hypergraphs and simplicial complexes provide natural mathematical formalisms for such higher-order structures~\cite{courtney2016generalized}. 
In the $d$-uniform hypergraph, each hyperedge connects exactly $d$ nodes, with $d=2$ reducing to a standard network. 
Recent studies employing the simplicial SIS ($s$-SIS) model~\cite{jhun2019simplicial,wang2021simplicial} have demonstrated that group interactions introduce nonlinear reinforcement effects. These effects fundamentally alter epidemic dynamics, giving rise to discontinuous phase transitions, bistability, and hysteresis phenomena~\cite{matamalas2020abrupt,iacopini2019simplicial}. 
Moreover, higher-order coupling lowers effective epidemic thresholds~\cite{deArruda2020,yuan2025impacts}, implying that neglecting group effects can severely underestimate outbreak risks. 
Beyond static higher-order descriptions, recent studies have extended simplicial epidemic models to temporal networks and spatial settings, incorporating control strategies based on critical simplex identification~\cite{PhysRevE.112.044409} and analyzing pattern dynamics through reaction–diffusion frameworks on higher-order topologies~\cite{BMB2025,MB2025pattern}.
However, real-world group interactions are neither static nor continuous, as they vary in frequency and intensity, often adapting to behavioral or environmental feedback.

Despite progress in adaptive networks and higher-order epidemic modeling, a unified framework integrating adaptive group interactions and strategic interventions remains limited. 
Most existing works treat adaptation as either individual-level rewiring or static group suppression, and even those that consider hyperedge removal typically rely on fixed structural rankings rather than dynamically updated risk measures. 
The intervention therefore does not respond to the shifting landscape of transmission risk during an ongoing outbreak. 
To bridge this gap, we develop an adaptive $s$-SIS model on $d$-uniform hypergraphs, where the evolution of node states and hyperedge activity are mutually coupled. 
Within this framework, we introduce two classes of hyperedge-level interventions: 
(i) Risk-based immunization, including spontaneous isolation and external deactivation; 
(ii) Structural rewiring, implemented through random or degree-preferential strategies that reshape the topology of the network. 
We extend the microscopic Markov chain approximation (MMCA)~\cite{gomez2010discrete} to analyze this model, deriving analytical conditions for stationary states and nonlinear epidemic thresholds. 
The theoretical predictions are further validated by Monte Carlo (MC) simulations, allowing us to quantify how the interplay of topology, behavior, and interventions governs epidemic resilience.

To summarize, the principal contributions of our work are as follows.
\begin{enumerate}[label=\arabic*), itemsep=3pt, parsep=0pt]
    \item We propose and formalize an adaptive $s$-SIS model on uniform hypergraphs, establishing a unified framework for higher-order epidemic dynamics with feedback-driven hyperedge activity.
    \item We design and systematically compare multiple hyperedge-level immunization and rewiring strategies, quantifying their effects on the stationary-state infection density and network structure.
    \item We demonstrate the emergence of discontinuous phase transitions, nonlinear thresholds, and hysteresis phenomena through analytical derivations and extensive simulations, revealing how adaptive feedback and initial conditions jointly shape epidemic outcomes.
\end{enumerate}

The remainder of this paper is structured as follows.
Section~\ref{sec:model} develops the adaptive simplicial SIS model, integrating both intervention strategies and rewiring mechanisms within a unified adaptive framework, and presents the corresponding MMCA analytical formulation.
Section~\ref{sec:simulations} conducts numerical simulations and Monte Carlo validations.
Finally, in Section~\ref{sec:conclusion}, we summarize the conclusions and discuss their broader implications.

\section{Modelling}		
\label{sec:model}

This section introduces the adaptive simplicial SIS model on $d$-uniform hypergraphs and its associated hyperedge-level intervention mechanisms. 
We first describe the adaptive structure that couples node-level epidemic dynamics with hyperedge activity in Section~\ref{model_des}, 
followed by the formulation of two classes of intervention strategies, immunization and rewiring, which act at the group level to mitigate epidemic spread in Section~\ref{sec:intervention}. 
Finally, in Section~\ref{theory}, we provide a theoretical analysis of the model using a homogeneous MMCA framework to derive nonlinear thresholds and equilibrium conditions. Table~\ref{tab:notations} summarizes the notations used throughout this paper.
	
\subsection{Simplicial SIS model with adaptive hyperedges}  \label{model_des}
	
We establish an adaptive epidemic modelling framework that explicitly integrates the coevolution between disease transmission and higher-order social interactions. 
In many real-world systems, such as group gatherings, collaborative teams, or biological complexes, contagion does not spread solely through pairwise contacts but through group-based interactions involving multiple individuals simultaneously. 
To capture these higher-order effects, we represent the population as a $d$-uniform hypergraph, which naturally extends classical pairwise networks by allowing each hyperedge to connect $d$ distinct nodes instead of two.

\begin{table}[!htbp]
\centering
\caption{Summary of key variables and parameters in the adaptive $s$-SIS model.}
\label{tab:notations}
\begin{tabular}{>{\centering}m{1.5cm} >{\centering}m{4cm} >{\raggedright\arraybackslash}m{12cm}}
\toprule
\multicolumn{1}{c}{\textbf{Symbol}} &
\multicolumn{1}{c}{\textbf{Category}} &
\multicolumn{1}{c}{\textbf{Description}} \\
\midrule

$n$ & Network indices & Total number of nodes in the system. \\
$m$ & Network indices & Total number of hyperedges. \\
$d$ & Network indices & Size of each hyperedge in the $d$-uniform hypergraph. \\
$H(n,m,d)$ & Network indices & Random $d$-uniform hypergraph representation. \\
$\mathcal{V},\ \mathcal{E}$ & Network indices & Sets of vertices and hyperedges, respectively. \\
$\mathcal{E}_i$ & Network indices & Set of hyperedges incident to node $i$. \\
$k_i$ & Network indices & Hyperdegree of node $i$. \\
$\langle k\rangle$ & Network indices & Average hyperdegree across all nodes. \\
$X_i(t)$ & Network indices & Disease state of node $i$ at time $t$ ($S$ or $I$). \\
$p_i(t)$ & Model parameters & Probability that node $i$ is infected at time $t$. \\
$\beta_d$ & Model parameters & Disease transmission probability through a hyperedge, the same as $\beta$ \\
$\mu$ & Model parameters & Recovery rate from infected to susceptible state. \\
$\Phi_e(t)$ & Model parameters & Infection pressure of hyperedge $e$, defined as the joint infection probability of all its member nodes. \\
$\Theta_e(t)$ & Model parameters & Activity level of hyperedge $e$, ranging from $0$ to $1$. \\
$\eta$ & Model parameters & Sensitivity parameter controlling how strongly activity responds to local infection pressure. \\
$\gamma$ & Model parameters & Spontaneous recovery rate of an inactive hyperedge toward full activity. \\
$q_i(t)$ & Model parameters & Probability that susceptible node $i$ avoids infection from all its incident hyperedges. \\

$w$ & Intervention factors & Fraction of hyperedges that can be immunized under a given resource budget. \\

$\theta_{\mathrm{min}}$ & Intervention factors & Critical activity threshold triggering spontaneous isolation of a hyperedge. \\
$\mathcal{U}_d$ & Intervention factors & Set of all possible $d$-node subsets, used for random rewiring. \\
$\Theta_0$ & Intervention factors & Initial activity level assigned to newly rewired hyperedges. \\
$\alpha$ & Intervention factors & Offset parameter ensuring non‑zero sampling probability in degree‑preferential rewiring. \\

$\rho$ & Theoretical analysis & Stationary mean infection density at equilibrium. \\
$\rho(t)$ & Theoretical analysis & Global infection prevalence in the system. \\
$\Theta^{\star}$ & Theoretical analysis & Stationary mean hyperedge activity at equilibrium. \\
$\beta_n$ & Theoretical analysis & Nonlinear epidemic threshold for higher‑order interactions ($d \geq 3$). \\
$\beta_c$ & Theoretical analysis & Classical linear epidemic threshold for pairwise interactions ($d = 2$). \\
$G(\rho)$ & Theoretical analysis & Self‑consistency function used in the homogeneous MMCA framework. \\
$\Lambda_{\max}(\mathbf{M})$ & Theoretical analysis & Leading eigenvalue of the adjacency matrix, used in the linear threshold condition. \\

\bottomrule
\end{tabular}
\end{table}

Formally, we generalize the Erdős–Rényi random graph $G(n,m)$ to its higher-order analogue $H(n,m,d)$, 
where $n$ denotes the number of nodes (individuals), $m$ the number of hyperedges (group interactions), and $d$ the hyperedge dimension representing group size. 
The vertex set is defined as $\mathcal{V}=\{v_1,v_2,\dots,v_n\}$, and the hyperedge set as $\mathcal{E}=\{e_1,e_2,\dots,e_m\}$, where each hyperedge $e\subseteq\mathcal{V}$ satisfies $|e|=d$. 
In this construction, all hyperedges are treated as statistically equivalent, implying uniform group sizes and random composition of members.

In the sparse regime where $m\ll n^d$, the probability that two nodes share multiple hyperedges becomes negligible, 
and the hyperdegree distribution referred to the number of hyperedges incident to a node is approximately Poisson with mean
\begin{equation}
\langle k \rangle = \frac{md}{n} = \frac{1}{n}\sum_{i=1}^{n} |\mathcal{E}_i|,
\end{equation}
where $\mathcal{E}_i$ denotes the set of hyperedges containing node $i$. 
The average hyperdegree $\langle k\rangle$ characterizes the expected number of group interactions per individual and thus determines the effective connectivity of the system.

In real-world social and biological systems, contagion processes are rarely independent of the network structure on which they occur. 
Individuals tend to alter their group participation in response to perceived infection risk, while the collective organization of social groups evolves as a consequence of behavioral adaptation. 
Such adaptive feedback between disease dynamics and network structure has been widely observed during epidemics, where people spontaneously reduce high-risk interactions but gradually restore connectivity once perceived danger declines. 
To capture this adaptive interplay, our model introduces a dynamical mechanism in which epidemic dynamics and hyperedge activity evolve simultaneously.

Specifically, the model integrates two coupled processes, as illustrated in Fig.~\ref{fig:adaptiveframework}(a):

\begin{enumerate}[label=\arabic*), itemsep=2pt, parsep=0pt, topsep=4pt]
  \item \textbf{Node-level epidemic dynamics:} Each node $i$ exhibits a binary state $X_i(t) \in \{0,1\}$, representing susceptible ($S$) and infected ($I$) states. 
  Infection transmission occurs through higher-order interactions, requiring a critical configuration in which all other $d-1$ members within a hyperedge are infected. 
  Each such hyperedge independently transmits infection with probability $\beta_d$ per time step, while infected nodes recover with probability $\mu$.

  \item \textbf{Hyperedge-level adaptive dynamics:} Each hyperedge $e$ is endowed with a \textit{dynamic activity level} $\Theta_e(t) \in [0,1]$, representing its effective interaction intensity. 
  This variable captures behavioral adaptation at the group level: hyperedges with high infection pressure tend to reduce activity, while inactive ones gradually regain it over time. 
  The temporal evolution of $\Theta_e(t)$ follows
  \begin{equation}
  \label{eq:Theta_update}
  \Theta_e(t+1) = \Theta_e(t)\left[1-\eta\Phi_e(t)\right] + \gamma\left[1-\Theta_e(t)\right],
  \end{equation}
  where $\eta \in [0,1]$ quantifies sensitivity to local infection pressure, and $\gamma \in [0,1]$ characterizes spontaneous recovery of interaction activity.
  The first term drives the activity downward when the hyperedge is exposed to high infection pressure $\Phi_e(t)$, while the second term drives a spontaneous return toward full activity irrespective of the current state.
  Together they produce a dynamic balance between risk-driven suppression and intrinsic recovery.
\end{enumerate}

The \emph{hyperedge infection pressure} $\Phi_e(t)$, defined as the joint probability that all nodes in $e$ are simultaneously infected, is given by
\begin{equation}
\label{eq:Phi}
\Phi_e(t) = \Pr ( \bigcap_{i \in e} \{X_i(t) = I\}) \approx \prod_{i \in e} p_i(t) ,
\end{equation} 
where $p_i(t) = \Pr(X_i(t) = I)$.
Equation~(3) extends the edge epidemic importance (EI)-based immunization strategy introduced in \cite{matamalas2018effective}. It computes the joint probability that all nodes within a hyperedge are simultaneously infected, which serves as the basis for ranking and deactivating or rewiring hyperedges in our intervention strategies. This probability follows from the individual-based mean-field (IBMF) theory on hypergraphs, which tracks each node's infection probability $p_i$ independently and neglects dynamical correlations between node states \cite{gomez2010discrete}. Hence the approximation $\Phi_e(t) \approx \prod_{i\in e} p_i(t)$ is a natural consequence of the IBMF closure.

The probability that susceptible node $i$ avoids infection through all incident hyperedges is
\begin{equation}
\label{eq:qi(t)}
q_i(t) = \prod_{e\in\mathcal{E}_i}\left(1-\Theta_e(t)\beta_d\prod_{\substack{j\in e \\ j\neq i}} p_j(t)\right),
\end{equation}
leading to the complete node-state evolution as
\begin{equation}
\label{eq:mmca_node}
p_i(t+1) = \left(1-p_i(t)\right)\left(1-q_i(t)\right) + (1-\mu)p_i(t).
\end{equation}

Eq.~\eqref{eq:Theta_update}--\eqref{eq:mmca_node} define the microscopic dynamics. 
They capture the core feedback loop where increased local infection pressure reduces hyperedge activity introduced in Eq.~\eqref{eq:Theta_update}, which in turn lowers transmission probabilities in subsequent steps in Eq.~\eqref{eq:qi(t)}, and vice versa.

For analytical tractability, the model relies on several simplifying assumptions:
(i) the hypergraph is $d$-uniform and constructed uniformly at random, so that hyperedges are statistically equivalent and the hyperdegree distribution is approximately Poisson;
(ii) all hyperedges share the same adaptation parameters $\eta$ and $\gamma$, implying uniform responsiveness to infection pressure across groups;
(iii) node states within a hyperedge are treated as independent.

\begin{figure*}[!t]
  \centering
  \vspace{-4em}
  \includegraphics [width=0.95\textwidth]{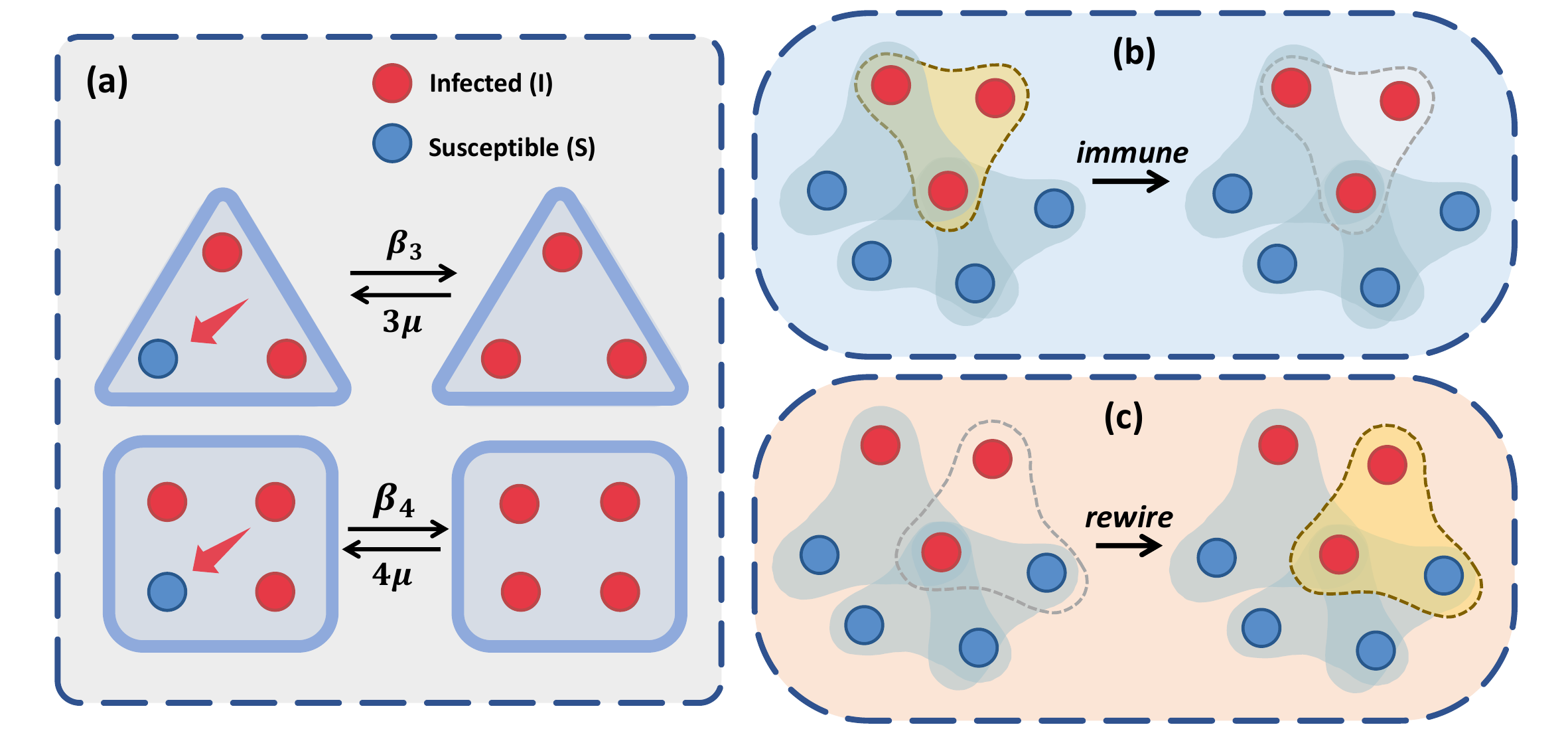}
  \caption{\textbf{Schematic illustration of the adaptive epidemic model on uniform hypergraphs.} 
   (a) Demonstration of the $s$-SIS model. 
   Red and blue circles represent infected and susceptible nodes, respectively, connected by hyperedges (light blue regions). 
   Infection spreads with rate $\beta_d$ when all $d-1$ other members in a hyperedge are infected, 
   whereas each infected node recovers independently at rate $\mu$. 
   (b) Illustration of the immunization process on a 3-uniform hypergraph. 
   Hyperedges experiencing high infection pressure are selectively deactivated according to the adopted intervention strategy.
   (c) Illustration of the rewiring mechanism. 
   Deactivated hyperedges are replaced by newly sampled ones to preserve structural connectivity while modifying higher-order interaction patterns. 
   Together, these processes create a closed feedback loop between node-level epidemic states and the adaptive evolution of hyperedge activity.}
  \label{fig:adaptiveframework}
\end{figure*}

\subsection{Hyperedge-level intervention strategies}  \label{sec:intervention}
	
\begin{algorithm}[t]
\caption{Spontaneous Isolation}
\label{alg:spontaneous-isolation}
\begin{algorithmic}[1]
\REQUIRE Hyperedge set $\mathcal{E}$, activity levels $\{\Theta_e\}$, threshold $\theta_{\mathrm{min}}$, budget $w$
\ENSURE Immunized hyperedge set $\mathcal{E}_{\text{immune}}$
\STATE $m \leftarrow |\mathcal{E}|$, $m_{\text{budget}} \leftarrow \lfloor w \cdot m \rfloor$
\STATE $\mathcal{E}_{\text{candidates}} \leftarrow \{ e \in \mathcal{E} : \text{active}[e] \text{ and } \Theta_e < \theta_{\mathrm{min}} \}$
\IF{$\mathcal{E}_{\text{candidates}} \neq \emptyset$}
    \STATE Sort $\mathcal{E}_{\text{candidates}}$ by $\Theta_e$ in descending order
    \STATE $\mathcal{E}_{\text{immune}} \leftarrow$ first $\min(m_{\text{budget}}, |\mathcal{E}_{\text{candidates}}|)$ hyperedges from sorted $\mathcal{E}_{\text{candidates}}$
\ELSE
    \STATE $\mathcal{E}_{\text{immune}} \leftarrow \emptyset$
\ENDIF
\FOR{each $e \in \mathcal{E}_{\text{immune}}$}
    \STATE $\text{active}[e] \leftarrow \text{False}$
    \STATE $\Theta_e \leftarrow 0$
\ENDFOR
\RETURN $\mathcal{E}_{\text{immune}}$
\end{algorithmic}
\end{algorithm}

\begin{algorithm}[t]
\caption{Targeted Immunization}
\label{alg:targeted-immunization}
\begin{algorithmic}[1]
\REQUIRE Hyperedge set $\mathcal{E}$, infection pressures $\{\Phi_e\}$, budget $w$
\ENSURE Immunized hyperedge set $\mathcal{E}_{\text{immune}}$
\STATE $m \leftarrow |\mathcal{E}|$, $m_{\text{budget}} \leftarrow \lfloor w \cdot m \rfloor$
\STATE $\mathcal{E}_{\text{active}} \leftarrow \{ e \in \mathcal{E} : \text{active}[e] \}$
\IF{$|\mathcal{E}_{\text{active}}| \leq m_{\text{budget}}$}
    \STATE $\mathcal{E}_{\text{immune}} \leftarrow \mathcal{E}_{\text{active}}$
\ELSE
    \STATE Sort $\mathcal{E}_{\text{active}}$ by $\Phi_e$ in descending order
    \STATE $\mathcal{E}_{\text{immune}} \leftarrow$ first $m_{\text{budget}}$ hyperedges from sorted $\mathcal{E}_{\text{active}}$
\ENDIF
\FOR{each $e \in \mathcal{E}_{\text{immune}}$}
    \STATE $\text{active}[e] \leftarrow \text{False}$
    \STATE $\Theta_e \leftarrow 0$
\ENDFOR
\RETURN $\mathcal{E}_{\text{immune}}$
\end{algorithmic}
\end{algorithm}

In realistic social systems, both individual states and group interactions evolve dynamically in response to epidemic risk. 
To capture these adaptive processes, we integrate intervention strategies directly into the model framework, allowing the hypergraph structure to reorganize under infection pressure. 
Two complementary classes of interventions are considered: 
(i) Immunization, which reduces transmission by deactivating highly infectious hyperedges illustrated in Fig.~\ref{fig:adaptiveframework}(b), and 
(ii) Rewiring, which reconstructs inactive hyperedges to preserve global connectivity shown in Fig.~\ref{fig:adaptiveframework}(c). 
These mechanisms jointly define the adaptive feedback governing higher-order epidemic evolution.

\subsubsection{Immunization strategies}
Immunization aims to suppress contagion by selectively deactivating risky hyperedges, thereby reducing the effective transmission channels without completely dismantling the network. 
To capture both decentralized behavioral responses and centralized policy-driven interventions, we design two complementary strategies and describe their key principles and corresponding mathematical formulations below.

\textit{(1) Spontaneous Isolation (SI).} 
This mechanism mimics decentralized self-protection behavior commonly observed in real epidemics, where social groups spontaneously reduce interaction when perceiving elevated infection risk. 
As illustrated in Algorithm~\ref{alg:spontaneous-isolation}, each hyperedge $e$ has an activity level $\Theta_e(t)$ that reflects its current interaction intensity. 
At every discrete time step, the system evaluates all hyperedges and identifies those whose activity falls below a critical threshold $\theta_{\mathrm{min}}$ as
\begin{equation}
\mathcal{E}_{\text{isolate}}(t) = \left\{ e \in \mathcal{E} : \Theta_e(t) < \theta_{\mathrm{min}}  \right\}.
\label{eq:isolate_set}
\end{equation}
These hyperedges are considered likely to suspend contact spontaneously.

However, to model the realistic limitation of resources and compliance, the total number of isolated hyperedges cannot exceed a global immunization budget $w \in [0,1]$, corresponding to 
$m_{\text{max}} = \lfloor w \cdot m \rfloor$ eligible hyperedges. 
Operationally, the budget $w$ reflects the fraction of hyperedges that can be simultaneously targeted under limited public health capacity, such as the number of available response teams or isolation facilities, and the same uniform budget constraint applies to every hyperedge. 
Beyond constraining the intervention scale, $w$ also serves as the control parameter for identifying the HIT. 

Algorithmically, all candidate hyperedges in $\mathcal{E}_{\text{isolate}}(t)$ are ranked by their current activity level $\Theta_e(t)$ in descending order, and the top $m_{\text{max}}$ are selected for isolation. 
Formally, for each isolated hyperedge $e$, its activity and effective transmission capacity are set to zero, expressed as
\begin{equation}
\Theta_e(t+1) = 0, \qquad \text{active}[e] \leftarrow \text{False},
\end{equation}
ensuring that deactivated hyperedges no longer participate in subsequent infection or rewiring events, effectively removing them from the propagation process. 
Such local, threshold-based decision rules capture spontaneous behavioral adaptation without requiring any global information about network state.

\textit{(2) Targeted Immunization (TI).} 
Unlike SI, the targeted immunization scheme represents a centralized, information-driven strategy designed to allocate limited intervention resources most efficiently. 
Its logic parallels optimal control approaches, where the system identifies and deactivates hyperedges that contribute most to sustaining infection. 
As illustrated in Algorithm~\ref{alg:targeted-immunization}, we employ the hyperedge infection pressure $\Phi_e(t)$ introduced in Eq.~(\ref{eq:Phi}), defined as the joint probability that all nodes within $e$ are infected by 
\begin{equation}
\Phi_e(t) = \prod_{i\in e} p_i(t),
\end{equation}
where the approximation follows from statistical independence under the MMCA framework. 
This criterion extends the edge epidemic importance measure from network-based immunization to higher-order systems. 
Unlike previous hyperedge removal strategies that rank group interactions by static structural properties, the infection pressure $\Phi_e(t)$ is recomputed at every time step as the node infection probabilities evolve, so the set of targeted hyperedges adapts to the real-time epidemic situation rather than being fixed a priori. 
Intuitively, $\Phi_e(t)$ serves as a quantitative measure of the overall infection burden within a hyperedge, establishing itself as a principled risk metric for group-level transmission. 

The algorithm ranks all active hyperedges in decreasing order of their infection pressure $\Phi_e(t)$ and selects the top $m_{\text{target}} = \lfloor w \cdot m \rfloor$ hyperedges for immunization.  
Formally, let $\varepsilon \subseteq \mathcal{E}$ denote a candidate subset of hyperedges with cardinality $|\varepsilon| = m_{\text{target}}$.  
The optimal target set is determined by maximizing the cumulative infection pressure over all such subsets,
\begin{equation}
\mathcal{E}_{\text{target}} = 
\arg\max_{\varepsilon \subseteq \mathcal{E}}
\sum_{e \in \varepsilon} \Phi_e(t).
\end{equation}
Each selected hyperedge $e \in \mathcal{E}_{\text{target}}$ is subsequently deactivated by setting its activity $\Theta_e(t+1)=0$ and flagging it as inactive, consistent with the procedure used in spontaneous isolation.  
This approach concentrates intervention effort on the most influential hyperedges, which act as key transmission hubs, thus achieving a larger reduction in global infection density under the same resource constraints.

\subsubsection{Rewiring mechanism}
While both SI and TI suppress transmission by removing high-risk connections, such removal inevitably alters the network’s connectivity pattern and may lead to fragmentation. 
In real social systems, however, individuals rarely remain permanently disconnected. Instead, they form new groups or reorganize their interactions after previous connections are disrupted. 
To capture this inherent adaptive behavior, we introduce a rewiring process that regenerates new hyperedges in place of the immunized ones. 
This mechanism maintains the total number of hyperedges $m$ constant, ensuring the system preserves its size while allowing the higher-order structure to evolve dynamically. 
Hence, immunization and rewiring jointly constitute a feedback loop—one that suppresses infection through selective deactivation while simultaneously restoring connectivity through adaptive reorganization.

\textit{(1) Random Rewiring.} 
This mechanism represents unbiased social mixing, where each new hyperedge is sampled uniformly at random from the set of all possible $d$-node combinations of $\mathcal{V}$:
\begin{equation}
P\left \{e_{\mathrm{new}} = e\right \}  = \frac{1}{|\mathcal{U}_d|}, \qquad \forall\, e \in \mathcal{U}_d,
\end{equation}
where $\mathcal{U}_d$ denotes the collection of all distinct $d$-element subsets of the node set. 
To model cautious behavioral recovery, all new hyperedges are initialized with a small baseline activity $\Theta_0 = 0.1$, reflecting the gradual restoration of social contacts after isolation.

\textit{(2) Degree-Preferential Rewiring.} 
In many social systems, individuals with higher connectivity are more likely to form new groups after disruptions. 
To incorporate this empirically observed bias, we define the probability that node $i$ is selected during rewiring as proportional to its current hyperdegree $k_i$, adjusted by a small offset $\alpha>0$ to prevent isolation of low-degree nodes as
\begin{equation}
P_{\text{pref}}(i) = \frac{k_i + \alpha}{\sum_{j=1}^n (k_j + \alpha)} = \frac{k_i + \alpha}{md + n\alpha}.
\end{equation}
Newly generated hyperedges under this rule are also initialized with $\Theta_0=0.1$, ensuring consistency across different rewiring strategies.

It is worth noting that the proposed intervention strategies differ substantially in their information requirements and operational practicality.
Spontaneous isolation (SI) relies only on local threshold judgments, since each hyperedge simply compares its own activity $\Theta_e$ against $\theta_{\min}$, and can therefore be implemented in a decentralized manner with minimal information cost.
Targeted immunization (TI), by contrast, requires global sorting of all active hyperedges by their infection pressure $\Phi_e$, demanding centralized surveillance and coordination.
Similarly, random rewiring requires no network information, whereas degree-preferential rewiring needs the current hyperdegrees of all nodes.
These differences imply a trade-off between efficacy and practicality, since strategies that use more information tend to perform better but incur higher implementation costs.

Together, these two intervention components constitute a closed adaptive control loop linking microscopic behavioral responses to macroscopic network resilience. 
The former dynamically suppresses risky interactions, while the latter continuously reorganizes higher-order connectivity, yielding a flexible framework for analyzing the coevolution of epidemic spreading and social adaptation.
In this framework, each new hyperedge is sampled uniformly from the set of all distinct $d$-node subsets, structurally excluding self-loops and repeated hyperedges.
Moreover, when a hyperedge is deactivated, either through adaptive feedback or external intervention, its member nodes remain fully active in all other hyperedges they belong to.
The deactivation only removes the transmission pathway associated with that particular group, reflecting a targeted, partial disruption of social contacts rather than a global isolation of the involved individuals.

\subsection{Theoretical analysis}  
\label{theory}
To gain analytical insight into the adaptive dynamics of the simplicial SIS model, we perform a homogeneous reduction of the MMCA equations, treating nodes and hyperedges as statistically equivalent. 
This scalar approximation replaces microscopic variables with global averages, yielding closed-form self-consistency relations for the mean infection density and average hyperedge activity, which are convenient for threshold and bifurcation analysis.

\medskip
\noindent
\begin{Theorem}
\label{theory1}
For the adaptive $s$-SIS model on a $d$-uniform hypergraph with $d \geq 3$, the disease-free equilibrium $\rho=0$ is locally stable for all $\beta > 0$, and endemic equilibria emerge via a saddle-node bifurcation at the nonlinear threshold $\beta_n$ defined implicitly by the parametric relation
\begin{equation}
\beta_n = \frac{1 - \left( 1 - \frac{\mu \rho}{1-\rho} \right)^{1/\langle k\rangle}}{\Theta^{\star} \, \rho^{d-1}},
\end{equation}
where $\Theta^{\star} = \gamma / (\gamma + \eta \rho^d)$.
\end{Theorem}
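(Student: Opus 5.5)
We carry out the homogeneous reduction of Eqs.~\eqref{eq:Theta_update}--\eqref{eq:mmca_node}. We set $p_i(t)\equiv\rho(t)$ for all nodes and $\Theta_e(t)\equiv\Theta(t)$ for all hyperedges, so that $\Phi_e=\rho^d$ by Eq.~\eqref{eq:Phi}. We replace $|\mathcal{E}_i|$ by its mean $\langle k\rangle$, which is justified in the sparse Poisson regime of $H(n,m,d)$. This gives the two-dimensional map
\begin{equation*}
\rho(t+1)=(1-\rho)\Bigl[1-\bigl(1-\beta\Theta\rho^{d-1}\bigr)^{\langle k\rangle}\Bigr]+(1-\mu)\rho,\qquad
\Theta(t+1)=\Theta(1-\eta\rho^d)+\gamma(1-\Theta).
\end{equation*}
Fixed points satisfy $\Theta^\star=\gamma/(\gamma+\eta\rho^d)$, obtained directly from the second equation. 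Substituting into the first and dividing by $1-\rho$ gives the self-consistency relation $G(\rho)=0$, with
\begin{equation*}
G(\rho)=1-\bigl(1-\beta\Theta^\star(\rho)\rho^{d-1}\bigr)^{\langle k\rangle}-\frac{\mu\rho}{1-\rho}.
\end{equation*}
Solving $G(\rho)=0$ for $\beta$ yields exactly the expression in the statement, read as a curve $\beta(\rho)$ parametrized by $\rho\in(0,\rho_{\max})$, where $\rho_{\max}$ is the largest $\rho$ with $\mu\rho/(1-\rho)\le 1$.

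For local stability of $(\rho,\Theta)=(0,1)$, we linearize the map. The derivative of the first component with respect to $\rho$ at $\rho=0$ is $(1-\mu)+\langle k\rangle\beta\Theta\,\partial_\rho(\rho^{d-1})|_{0}$. For $d\ge3$ the second term vanishes because $\rho^{d-1}$ has zero derivative at the origin. The $\Theta$-derivative of the first component is proportional to $\rho^{d-1}$, so it is also $0$ there. The $\Theta$-equation at $\rho=0$ has derivative $1-\gamma$, and its $\rho$-derivative involves $\rho^{d-1}$, hence vanishes. The Jacobian is therefore diagonal with eigenvalues $1-\mu$ and $1-\gamma$, both in $(-1,1)$ for $\mu,\gamma\in(0,1]$. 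This gives local stability independent of $\beta$, and it contrasts with $d=2$, where the term $\langle k\rangle\beta$ survives and produces $\beta_c$.

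For the saddle-node claim, we study the curve $\beta(\rho)$ from the parametric relation. As $\rho\to0^+$ the numerator behaves like $\mu\rho/\langle k\rangle$ while the denominator behaves like $\rho^{d-1}$. Since $d\ge3$, this gives $\beta(\rho)\sim \mu\rho^{2-d}/\langle k\rangle\to\infty$. As $\rho\to\rho_{\max}$, the numerator tends to $1$ while $\Theta^\star\rho^{d-1}$ stays bounded, so $\beta(\rho)$ stays finite; alternatively it increases again if we restrict to $\beta\le1$ or require probabilities to remain admissible. Hence $\beta(\rho)$ attains an interior minimum $\beta_n=\min_\rho\beta(\rho)>0$. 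For $\beta<\beta_n$ no positive root exists. At $\beta=\beta_n$ one double root appears, with $G=0$ and $\partial_\rho G=0$. For $\beta$ slightly above $\beta_n$, two roots emerge. Genericity follows from checking $\partial_\beta G>0$ and $\partial_\rho^2G\neq0$ at the minimum, which are the standard saddle-node transversality and nondegeneracy conditions. The upper branch is stable and the lower branch is an unstable saddle separating the basin of $\rho=0$, which we verify via the sign of $G$ on either side.

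The main obstacle is rigorously showing that $\beta(\rho)$ has a single nondegenerate interior minimum, rather than several critical points, once the feedback factor $\Theta^\star(\rho)$ is included. The $\eta$-term makes $1/\Theta^\star=1+\eta\rho^d/\gamma$ increase with $\rho$, which can create further turning points and hence additional equilibria; this matches the reentrant and bistable behavior described in the abstract. We would first prove the existence of a minimum from the boundary asymptotics above. We would then state nondegeneracy as the generic condition $\beta''(\rho_n)\neq0$ rather than proving uniqueness, defining $\beta_n$ as the value at the first fold encountered as $\beta$ increases.
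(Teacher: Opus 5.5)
Your argument follows the paper's route: the homogeneous reduction, $\Theta^\star=\gamma/(\gamma+\eta\rho^d)$, the self-consistency relation solved for $\beta$ as a parametric curve, and $\beta_n$ taken as its smallest value. You are more careful than the paper in two places. The paper infers stability of $\rho=0$ only from the leading $\rho^{d-1}$ term of its scalar function $G$, and it simply asserts the saddle-node. You instead compute the full Jacobian $\mathrm{diag}(1-\mu,1-\gamma)$ at $(0,1)$, and you use $\beta(\rho)\sim\mu\rho^{2-d}/\langle k\rangle\to\infty$ as $\rho\to0^+$ to show the threshold is a positive, attained minimum.

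One inference does not follow as written. Knowing $\beta(\rho)\to\infty$ as $\rho\to0^+$ and $\beta(\rho_{\max})<\infty$ only tells you the minimum over $(0,\rho_{\max}]$ is attained. It could be attained at the endpoint $\rho_{\max}=1/(1+\mu)$, where there is no fold. Your remark about restricting to $\beta\le 1$ does not change the shape of the curve, so it does not help. The missing observation is about the numerator. With $x=\mu\rho/(1-\rho)$, you have $\frac{d}{dx}\bigl[1-(1-x)^{1/\langle k\rangle}\bigr]=\frac{1}{\langle k\rangle}(1-x)^{1/\langle k\rangle-1}\to+\infty$ as $x\to1^-$ whenever $\langle k\rangle>1$. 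Meanwhile $\Theta^\star\rho^{d-1}$ is smooth and positive there. Hence $\beta'(\rho)\to+\infty$ as $\rho\to\rho_{\max}^-$, so the minimum is interior and is a genuine fold with $\partial_\rho G=0$.

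Separately, the sign of $G$ does not show that the upper branch is stable. Eliminating $\Theta$ gives $\det(J-I)=-(\gamma+\eta\rho^d)(1-\rho)\,\partial_\rho G$ at any equilibrium, where $\partial_\rho G$ is the derivative of your $G$ with $\Theta^\star(\rho)$ substituted. This proves the lower branch is a saddle, with real eigenvalues on either side of $1$. On the upper branch it only excludes an odd number of real eigenvalues above $1$; it does not exclude a complex pair leaving the unit circle. The theorem makes no claim about the upper branch, so you can simply drop that sentence.
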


\medskip
\noindent
\textbf{Proof.}  
Let $\rho(t)$ denote the average infection probability of nodes at time $t$ and $\Theta(t)$ the mean hyperedge activity:
\begin{equation}
\rho(t) = \frac{1}{n}\sum_{i=1}^{n} p_i(t), \qquad 
\Theta(t) = \frac{1}{m}\sum_{e \in \mathcal{E}} \Theta_e(t).
\end{equation}

Under the homogeneous approximation, each node participates in $\langle k \rangle$ hyperedges, and the average probability that a susceptible node avoids infection from all its incident hyperedges according to Eq.~\eqref{eq:qi(t)} becomes
\begin{equation}
q(t) = \left [  1 - \Theta(t)\, \beta\, \rho(t)^{\,d-1} \right ]^{\langle k \rangle},
\end{equation}
where $\rho(t)^{d-1}$ approximates the probability that all other $d-1$ members of a hyperedge are infected. The corresponding evolution of the mean infection density is
\begin{equation}
\rho(t+1) = (1-\rho(t))\big(1-q(t)\big) + (1-\mu)\rho(t).
\label{eq:rho_t}
\end{equation}

Averaging the hyperedge activity dynamics gives
\begin{equation}
\Theta(t+1) = \Theta(t)\big(1 - \eta \rho(t)^d\big) + \gamma \big(1-\Theta(t)\big),
\end{equation}
with $\rho(t)^d$ approximating the average infection pressure. At equilibrium where $\rho(t+1)=\rho(t)=\rho$, $\Theta(t+1)=\Theta(t)=\Theta^{\star} $, the stationary hyperedge activity is
\begin{equation}
\Theta^{\star} = \frac{\gamma}{\gamma + \eta \rho^d}.
\end{equation}

Substituting $\Theta^{\star}$ into Eq.~\eqref{eq:rho_t}, we set up a self-consistent equation for $\rho$ in the stationary state as
\begin{equation}
\rho = \frac{1-\rho}{\mu} \left[ 1 - \left( 1 - \frac{\gamma \beta \rho^{d-1}}{\gamma + \eta \rho^d} \right)^{\langle k \rangle} \right].
\label{eq:self}
\end{equation}

Using the binomial expansion $(1-\epsilon)^n = 1 - n\epsilon + \tfrac{n(n-1)}{2}\epsilon^2 + O(\epsilon^3)$ to expand near $\rho\to 0$, 
we define the self-consistency function $G(\rho)$ as
\begin{equation}
\label{eq:G}
G(\rho) = \frac{\langle k \rangle}{\mu} \beta \, \rho^{d-1} + o(\rho^{d-1}),
\end{equation}
since $\Theta^{\star}\to 1$ as $\rho\to 0$. 

For $d\ge 3$, the leading-order term in Eq.~\eqref{eq:G}  scales as $\rho^{d-1}$ with $d-1\ge 2$, so the linear term vanishes and $\rho=0$ is stable against infinitesimal perturbations. Non-zero equilibria therefore arise only via a saddle-node bifurcation at finite $\beta$. Thus, for $d \geq 3$, the system exhibits bistability and abrupt first-order transitions at $\beta_n$. In this case, the nonlinear threshold $\beta_n$ is identified by rewriting the self-consistent equation of Eq.~\eqref{eq:self} as a parametric function:
\begin{equation}
\beta_n = \frac{1 - \left( 1 - \frac{\mu \rho}{1-\rho} \right)^{1/\langle k\rangle}}{\Theta^{\star}\rho^{d-1}}.
\end{equation}
where $\Theta^{\star} = \gamma / (\gamma + \eta \rho^d)$ and $\beta_n$ corresponds to the smallest value for which endemic equilibria exist, marking the onset of endemic persistence.
\hfill$\square$

\medskip
\noindent
\textbf{Corollary 1.}  
For pairwise spreading dynamics $(d=2)$, we get the classical epidemic threshold $\beta_c$ with the linear form
\begin{equation}
\beta_c = \frac{\mu}{\Lambda_{\max}(\mathbf{M})},
\end{equation}
where $\Lambda_{\max}(\mathbf{M})$ denotes the largest eigenvalue of the adjacency matrix of the underlying network.

\medskip
\noindent
\textbf{Proof.}
When $d=2$, the self-consistent mapping $G(\rho)$ in Eq.~\eqref{eq:G} expands linearly around $\rho=0$:
\begin{equation}
G(\rho) = \frac{\langle k\rangle}{\mu}\,\beta\, \rho + o(\rho).
\end{equation}
The disease-free equilibrium $\rho=0$ is locally stable if small perturbations decay, which in a one-dimensional map requires
$ G'(0) < 1$. Substituting the linear expansion gives
\begin{equation}
\frac{\langle k\rangle}{\mu}\,\beta < 1,
\end{equation}
leading to the classical threshold
\begin{equation}
\beta_c = \frac{\mu}{\langle k\rangle}.
\label{eq:beta_c_linear}
\end{equation}

Under the homogeneous-degree approximation, we have 
$\Lambda_{\max}(\mathbf{M}) \approx \langle k\rangle$, and thus 
Eq.~\eqref{eq:beta_c_linear} may be equivalently expressed as
\begin{equation}
\beta_c = \frac{\mu}{\Lambda_{\max}(\mathbf{M})}.
\end{equation}
\hfill$\square$

\medskip
\noindent
\textbf{Remark 1.}  
The distinction between linear and nonlinear epidemic thresholds stems from fundamental differences in early-stage dynamics: pairwise interactions ($d=2$) permit linear approximation and spectral characterization, whereas genuine higher-order interactions ($d\geq 3$) eliminate linear instability at the disease-free state due to the $\rho^{d-1}$ scaling. This mathematical foundation explains the emergence of discontinuous transitions and bistability in higher-order contagion processes—phenomena absent in classical pairwise models. Collectively, these results establish that adaptive hyperedge activity fundamentally transforms classical epidemic thresholds and induces nonlinear, bistable regimes in systems with higher-order interactions, providing a theoretical basis for understanding abrupt epidemic outbreaks and hysteresis effects in complex social systems.

\section{Numerical simulations}		
\label{sec:simulations}

In this section, we conduct comprehensive numerical simulations of the adaptive dynamics between higher-order network topology and epidemic spreading, building upon the theoretical framework established above. 
Our experiments systematically examine how the topology-dynamics coupling shapes epidemic outcomes across four complementary dimensions: 1) verification of discontinuous phase transitions; 2) characterization of adaptive hyperedge dynamics; 3) evaluation of immunization strategies; 4) assessment of rewiring mechanisms; and 5) validation on empirical higher-order collaboration networks. 
In the $s$-SIS model the fully susceptible state is absorbing, therefore finite-size simulations tend to prematurely freeze in the disease-free trap.
To circumvent this and obtain reliable stationary averages, we adopt the quasistationary (QS) method~\cite{ferreira2011quasistationary}.
Following the original recommendation, and confirmed by our own preliminary runs, the system is kept active by storing a history of 50 previously visited configurations; upon absorption, the current state is replaced with a randomly drawn historical one with probability $0.2$.

Unless otherwise specified, we set the network size with $n = 1500$ nodes with hyperedge size $d = 3$. The epidemic dynamics are parameterized with infection rate $\beta = 0.2$, average hyperdegree $\langle k \rangle = 6$,  recovery rate $\mu = 0.1$, 
while the adaptive hyperedge dynamics are governed by the sensitivity parameter $\eta = 0.6$, which captures how promptly groups reduce activity upon perceiving infection risk, and the spontaneous recovery rate $\gamma = 0.02$.
Simulations are initialized with an initial infection density $I(0) = 0.3$. The offset parameter for degree-preferential rewiring is set to $\alpha = 1.0$. All results obtained through theoretical or MC simulations represent averages over 200 independent simulations.
 
\subsection{Verification of discontinuous phase transitions}	

\begin{figure}[htbp]
  \centering
  \hspace*{-0.5cm}
  \begin{overpic}[width=1.05\linewidth]{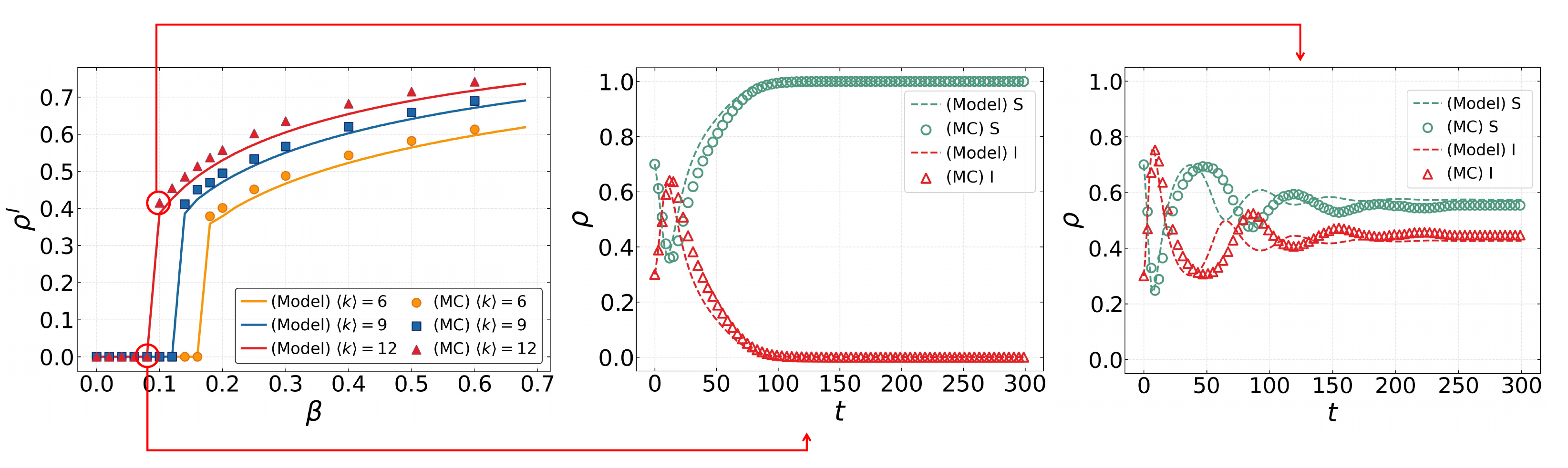}
    \put(18,-1.2){\textbf{(a)}}
    \put(52,-1.2){\textbf{(b)}}
    \put(84,-1.2){\textbf{(c)}}
  \end{overpic}

   \vspace{0.3cm} 
   \caption{
        \textbf{Comparison between theoretical predictions from MMCA equations and MC simulations on adaptive hypergraphs.} 
        Panel (a) shows the infection density $\rho^I$ in the stationary state as a function of infection rate $\beta$ under different average hyperdegrees $\langle k \rangle =6, 9, 12$. 
        Panels (b) and (c) display the temporal evolution of susceptible (S) and infected (I) fractions $\rho$ for $\langle k \rangle = 12$ at $\beta = 0.08$ and $\beta = 0.10$, respectively. 
        Solid curves denote MMCA theoretical predictions, and symbols represent MC simulation averages over $200$ independent simulations.
    }
    
   \label{fig_1}
  \end{figure}


\begin{figure*}[ht!]
    \centering
    \begin{subfigure}[b]{0.345\textwidth}
        \centering
        \includegraphics[width=\textwidth]{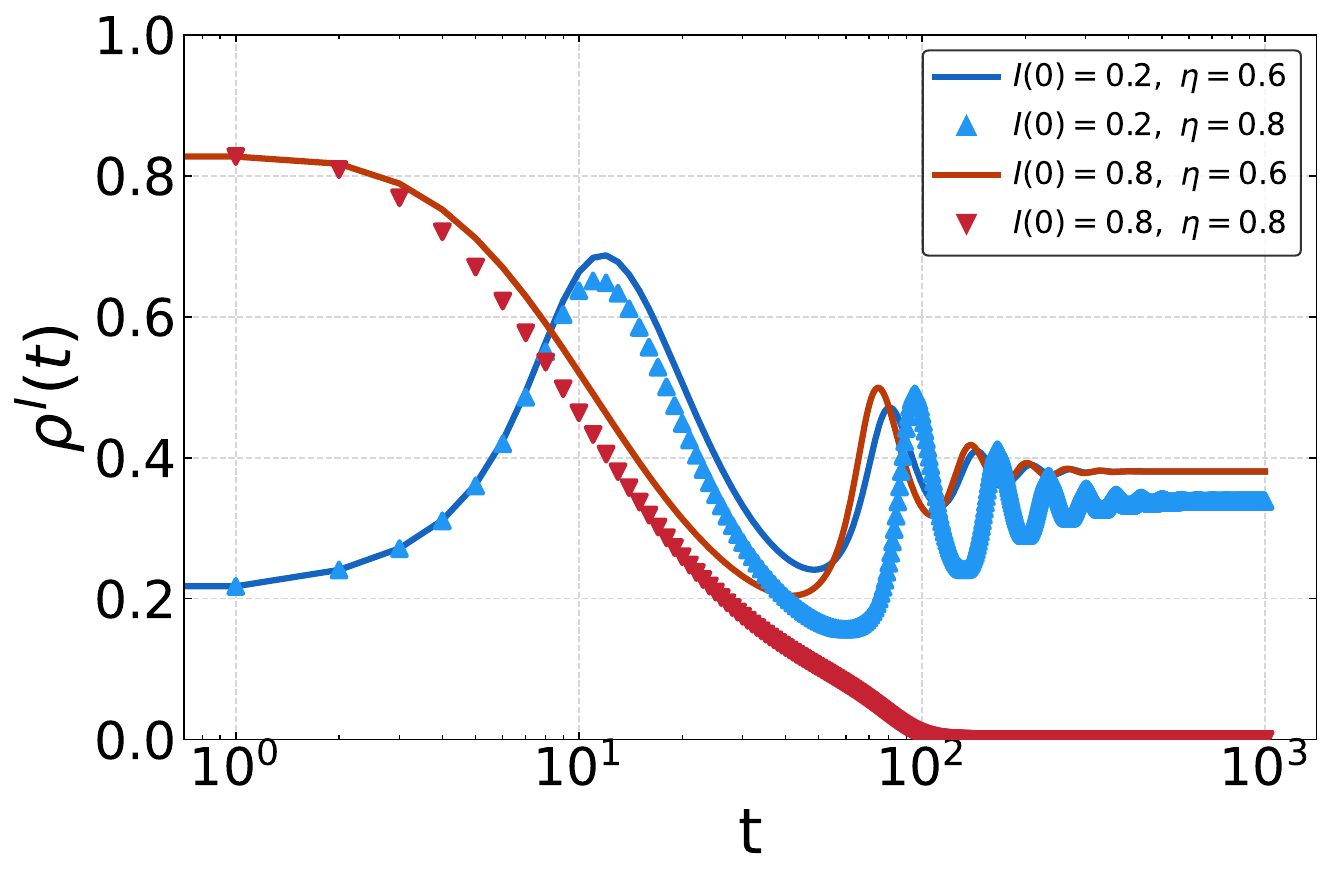}
        \caption{}
        \label{fig:char_adaptive_a}
    \end{subfigure}
    \hfill
    \begin{subfigure}[b]{0.315\textwidth}
        \centering
        \includegraphics[width=\textwidth]{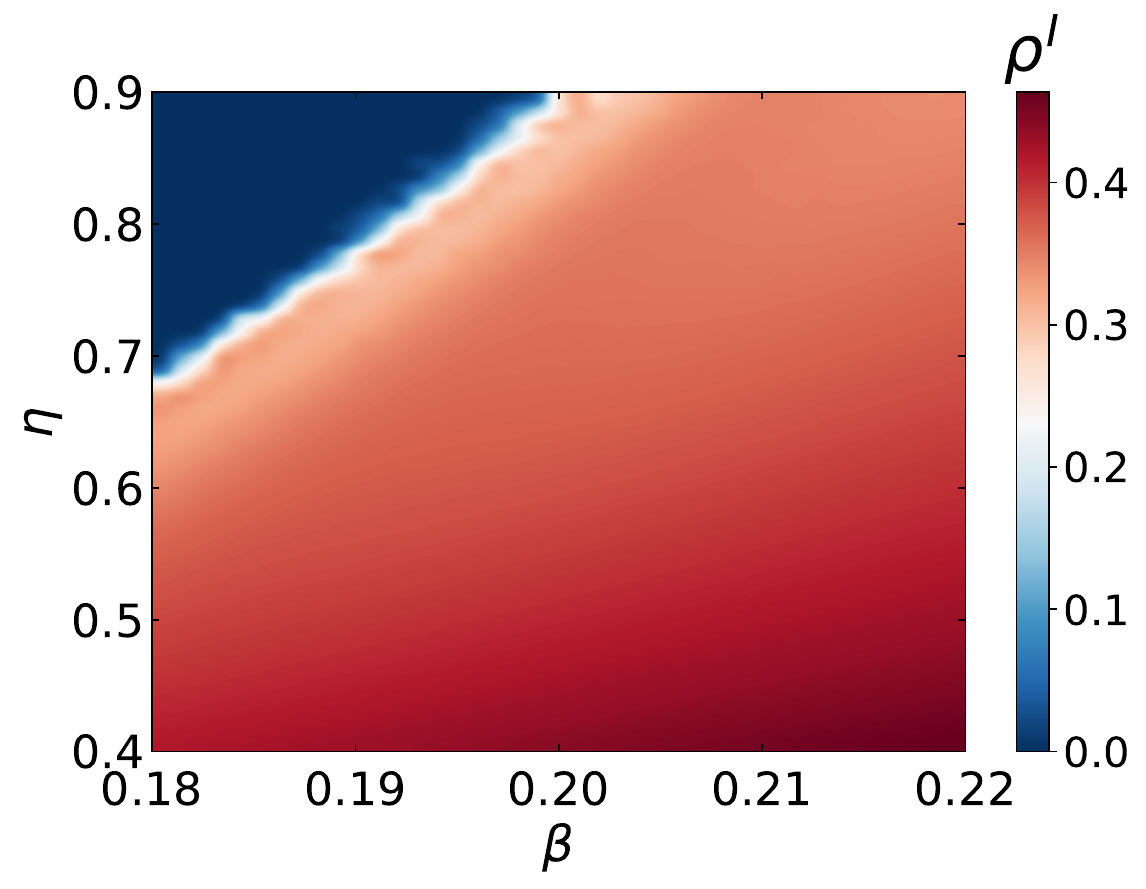}
        \caption{}
        \label{fig:char_adaptive_b}
    \end{subfigure}
    \hfill
    \begin{subfigure}[b]{0.315\textwidth}
        \centering
        \includegraphics[width=\textwidth]{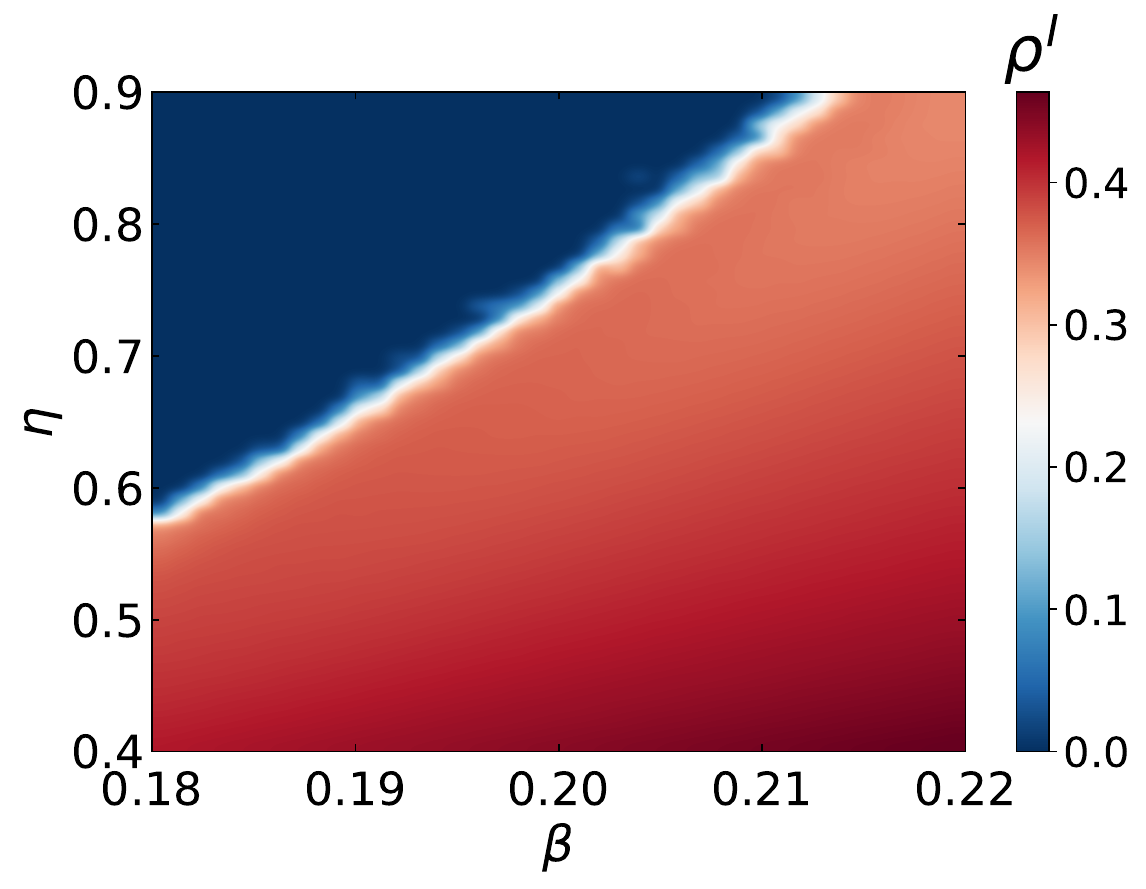}
        \caption{}
        \label{fig:char_adaptive_c}
    \end{subfigure}
    
    \begin{subfigure}[b]{0.345\textwidth}
        \centering
        \includegraphics[width=\textwidth]{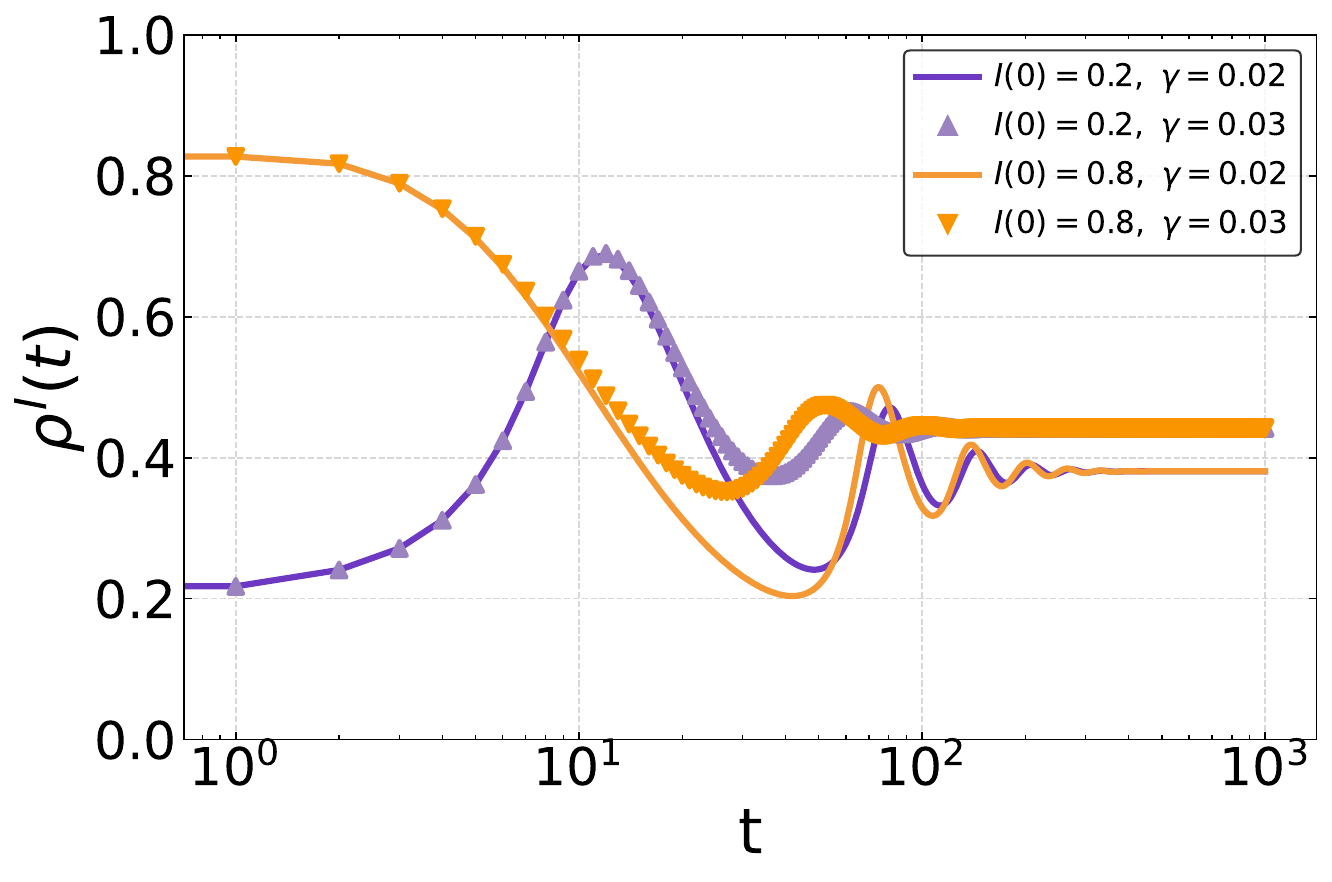}
        \caption{}
        \label{fig:char_adaptive_d}
    \end{subfigure}
    \hfill
    \begin{subfigure}[b]{0.315\textwidth}
        \centering
        \includegraphics[width=\textwidth]{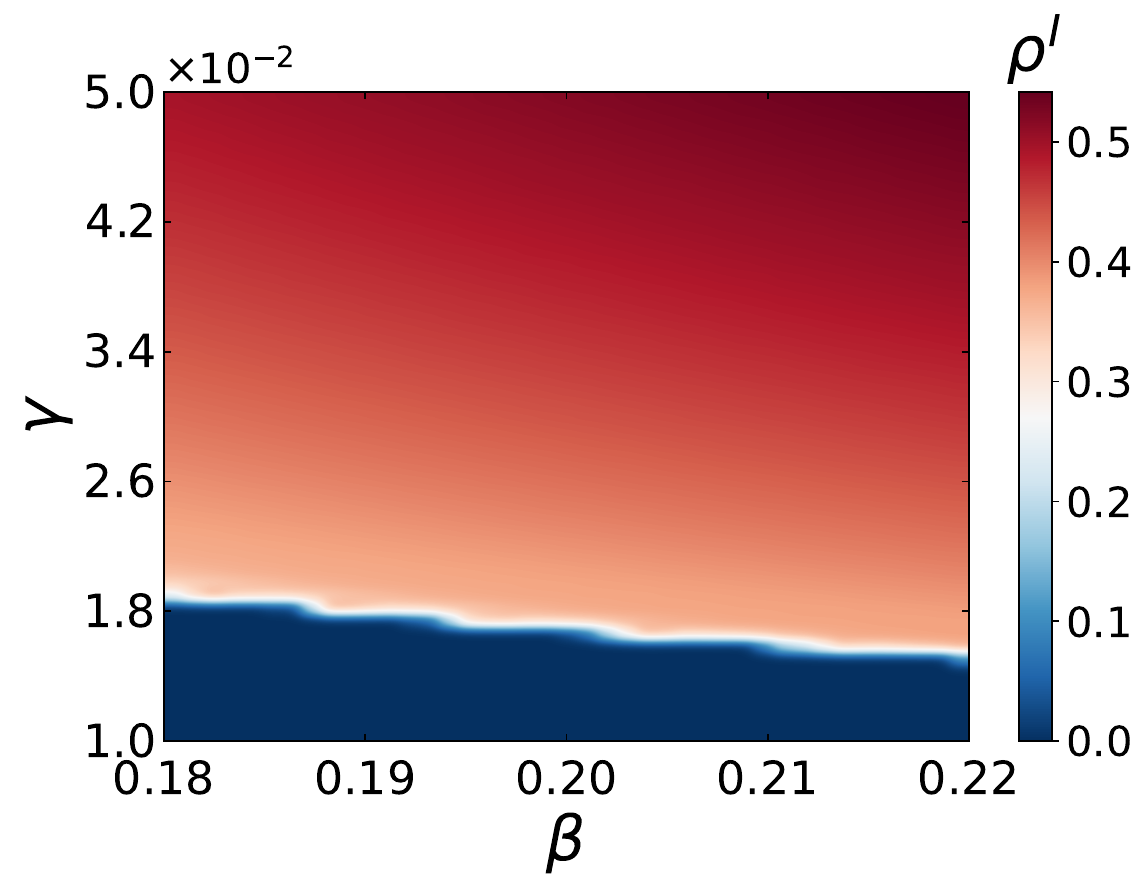}
        \caption{}
        \label{fig:char_adaptive_e}
    \end{subfigure}
    \hfill
    \begin{subfigure}[b]{0.315\textwidth}
        \centering
        \includegraphics[width=\textwidth]{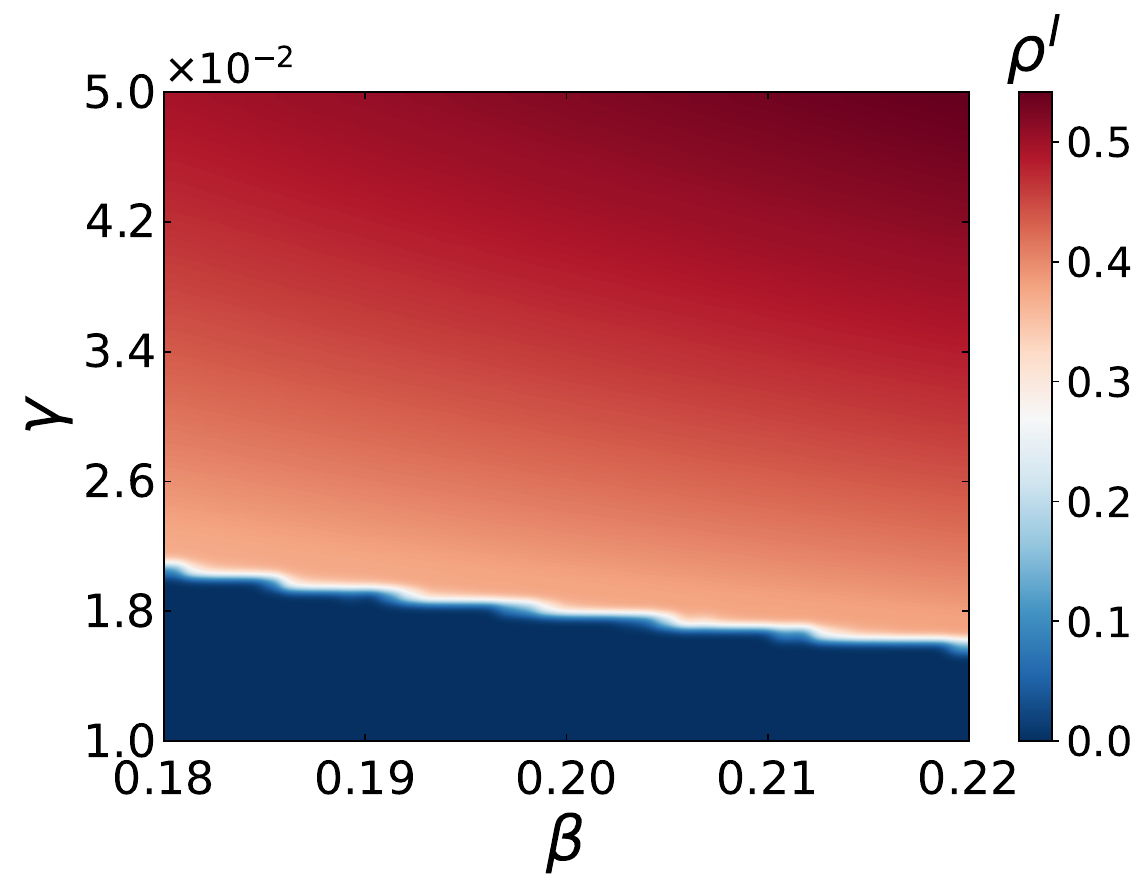}
        \caption{}
        \label{fig:char_adaptive_f}
    \end{subfigure}
    
    \caption{
        \textbf{The effects of initial conditions and adaptive parameters on outbreak dynamics and stationary-state prevalence.} 
        Time evolution of the infection density $\rho^I(t)$ over $10^3$ time steps under different initial conditions $I(0)$, shown for (a) varying sensitivity $\eta$ and (d) varying spontaneous recovery rate $\gamma$. 
        Panels (b) and (c) give the stationary-state infection density $\rho^I$ as a function of $\beta$ and $\eta$ for $I(0)=0.2$ and $I(0)=0.8$, while panels (e) and (f) show the corresponding stationary prevalence as a function of $\beta$ and $\gamma$ for the same two initial conditions. 
        The color bars indicate the final infection density, with dark blue marking disease-free states.}
    \label{fig_2}
\end{figure*}

To validate the effectiveness of the proposed MMCA formulation in reproducing epidemic dynamics on adaptive hypergraphs, we conduct a comparative analysis between theoretical predictions and MC simulations. Specifically, we investigate the stationary-state infection density $\rho$ as a function of the infection rate $\beta$ under three different average hyperdegrees $\langle k \rangle = 6, 9, 12$. 
The results are displayed in Fig.~\ref{fig_1}. 
The experimental setup enables us to evaluate the theoretical epidemic threshold and verify the existence of a first-order phase transition predicted by the MMCA analysis.

As illustrated in Fig.~\ref{fig_1}, the theoretical curves and simulation results exhibit strong agreement across all parameter regimes, confirming the validity of the MMCA in capturing macroscopic epidemic dynamics on hypergraphs. 
In Fig.~\ref{fig_1}(a), both theoretical and numerical results demonstrate a clear discontinuous transition: when $\beta$ exceeds a critical value $\beta_n$, the system abruptly shifts from a disease-free to an endemic state. 
This behavior is fully consistent with our theoretical analysis in 
Theorem~\ref{theory1}, which establishes that higher-order interactions induce a 
first-order (discontinuous) phase transition and a genuinely nonlinear 
epidemic threshold arising from a saddle-node bifurcation.
Moreover, the stationary infection density increases with the average hyperdegree $\langle k \rangle$, indicating that enhanced group connectivity facilitates persistent infection through multiple higher-order transmission pathways.

To further elucidate the temporal evolution near the critical region, Figs.~\ref{fig_1}(b) and (c) represent the time evolution of each compartmental fraction for $\langle k \rangle=12$ at $\beta=0.08$ and $\beta=0.10$, respectively. For $\beta=0.08 < \beta_n$, the infection gradually vanishes, and the system converges to a disease-free equilibrium after mild transient oscillations. In contrast, when $\beta=0.10 > \beta_n$, the infection persists and stabilizes at a finite endemic level after pronounced damped oscillations.
These oscillatory patterns originate from the intrinsic feedback loop between epidemic spreading and adaptive topology.  
An increase in infection density enhances the effective infection pressure $\Phi_e(t)$ across active hyperedges, which in turn suppresses their activity levels $\Theta_e(t)$ through the adaptive rule in Eq.~\eqref{eq:Theta_update}.  
The ensuing reduction in hyperedge activity lowers the overall transmission capacity, leading to a subsequent decline in infection prevalence.  
As the epidemic subsides, hyperedges gradually regain activity via spontaneous recovery, thereby reopening transmission channels and initiating the next fluctuation cycle.  
The interplay between node-level infection and group-level adaptation produces self-regulated oscillations around the endemic equilibrium and gives rise to the hysteresis phenomena observed in adaptive higher-order contagion systems.  
Such behavior highlights that epidemic persistence in hypergraphs is not solely determined by transmission rates but also by the dynamical coupling between contagion and structural adaptation, which jointly shape the resilience of networks and transition pathways.

\begin{figure}[ht!]
    \centering

    \begin{subfigure}[b]{0.48\textwidth}
        \centering
        \includegraphics[width=\textwidth]{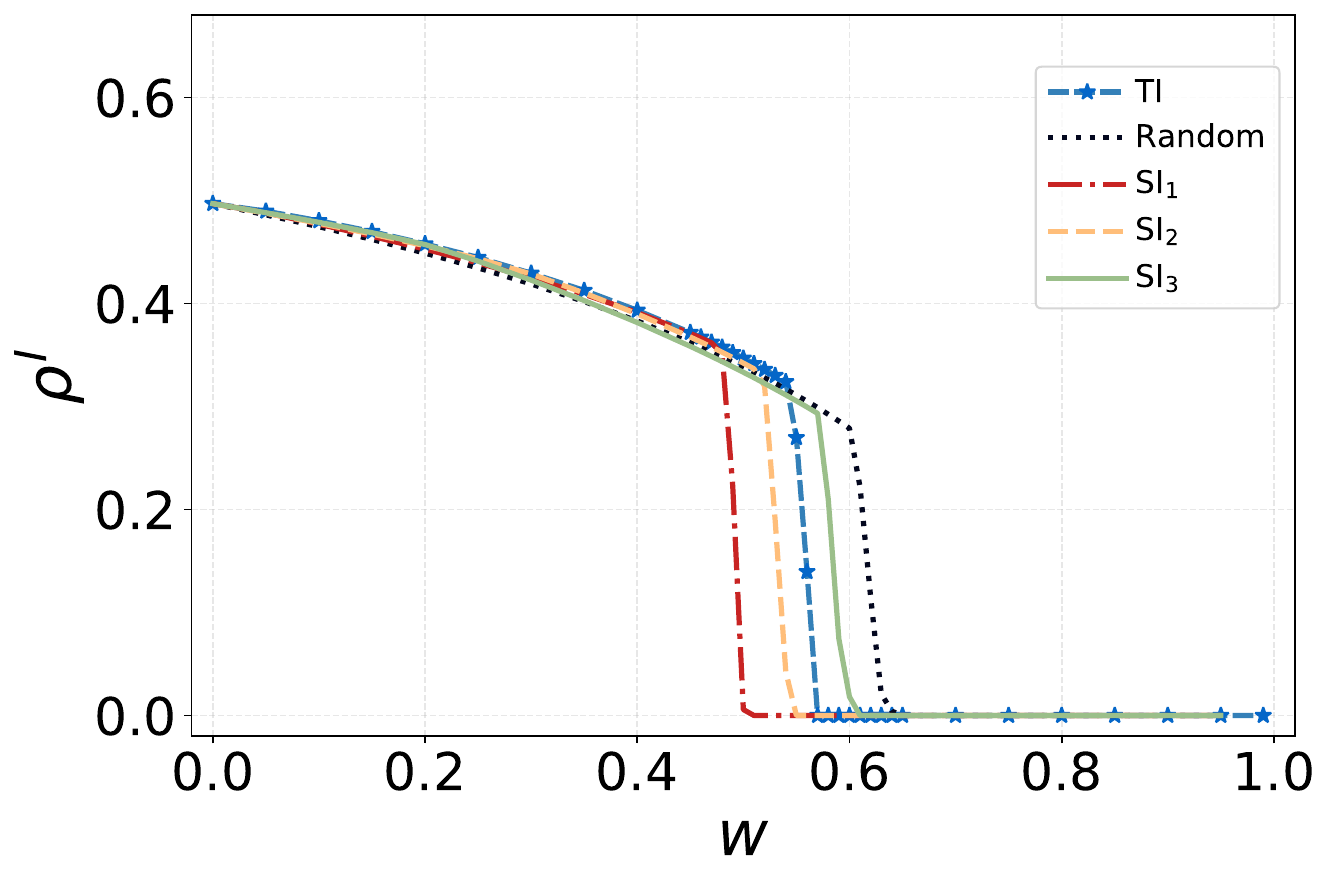}
        \caption{$\langle k \rangle = 6,\, \gamma = 0.02$}
        \label{fig:immunization_a}
    \end{subfigure}
    \hfill
    \begin{subfigure}[b]{0.48\textwidth}
        \centering
        \includegraphics[width=\textwidth]{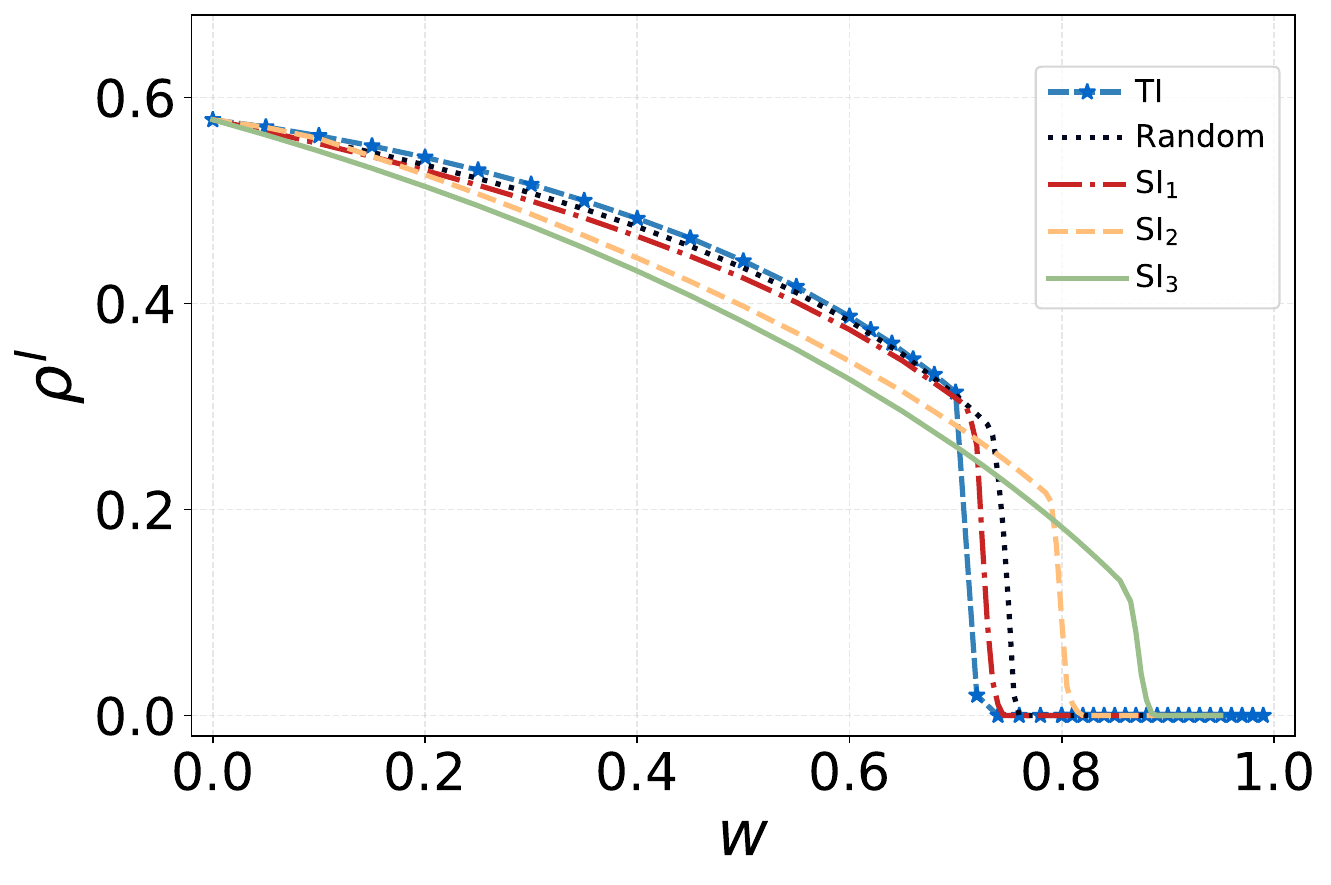}
        \caption{$\langle k \rangle = 9,\, \gamma = 0.02$}
        \label{fig:immunization_b}
    \end{subfigure}

    \vspace{0.2em}
    \begin{subfigure}[b]{0.48\textwidth}
        \centering
        \includegraphics[width=\textwidth]{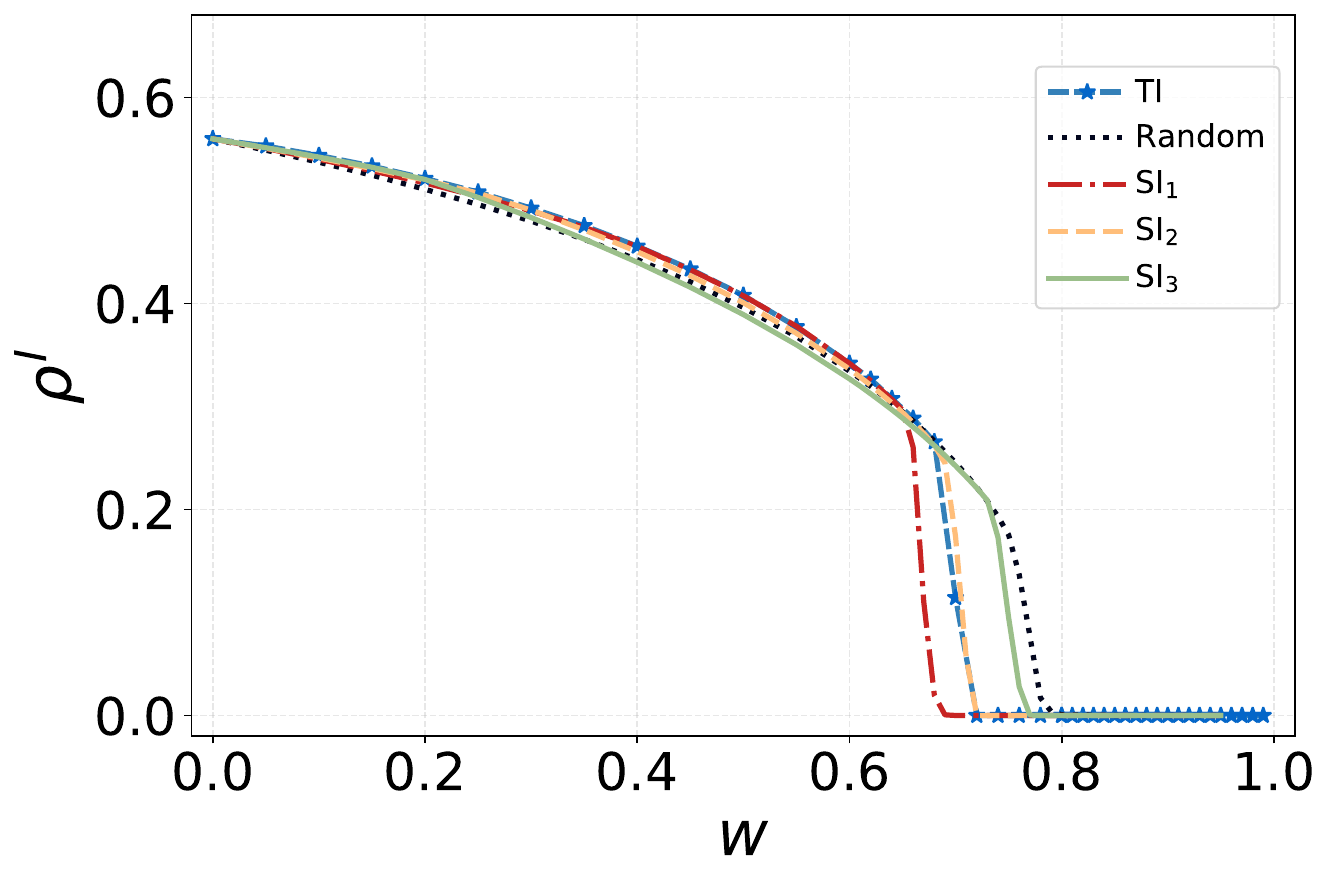}
        \caption{$\langle k \rangle = 6,\, \gamma = 0.03$}
        \label{fig:immunization_c}
    \end{subfigure}
    \hfill
    \begin{subfigure}[b]{0.48\textwidth}
        \centering
        \includegraphics[width=\textwidth]{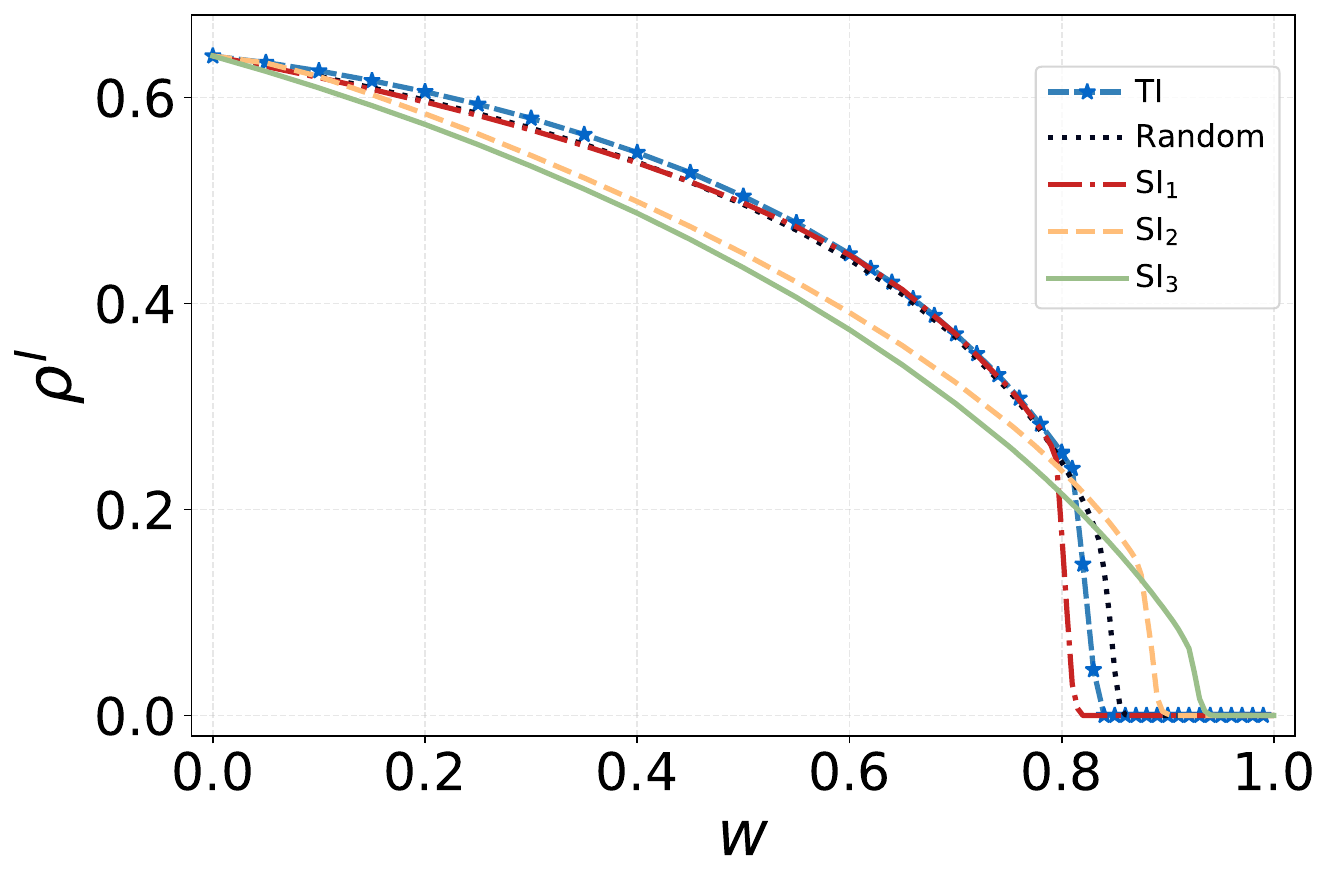}
        \caption{$\langle k \rangle = 9,\, \gamma = 0.03$}
        \label{fig:immunization_d}
    \end{subfigure}
    
    \caption{
      \textbf{Comparative effectiveness of immunization strategies.} 
      Performance comparison among Targeted immunization (TI), Random edge immunization (Random), and Spontaneous isolation (SI) strategies under varying activity thresholds $\theta_{\mathrm{min}}=0.2, 0.4, 0.6$ for SI\textsubscript{1}, SI\textsubscript{2}, and SI\textsubscript{3}, respectively. 
      Panels (a)–(d) correspond to four parameter settings: (a) $\langle k \rangle = 6,\, \gamma = 0.02$; (b) $\langle k \rangle = 9,\, \gamma = 0.02$; (c) $\langle k \rangle = 6,\, \gamma = 0.03$; and (d) $\langle k \rangle = 9,\, \gamma = 0.03$. 
      Each panel depicts the stationary infection density $\rho^I$ as a function of the immunization rate $w$, defined as the fraction of hyperedges removed or deactivated through immunizations. 
      The infection rate is fixed at $\beta = 0.35$ across all simulations.
    }
    \label{fig_3}
\end{figure}

\begin{figure}[ht!]
    \centering
    \begin{subfigure}[b]{0.48\textwidth}
        \centering
        \includegraphics[width=\textwidth]{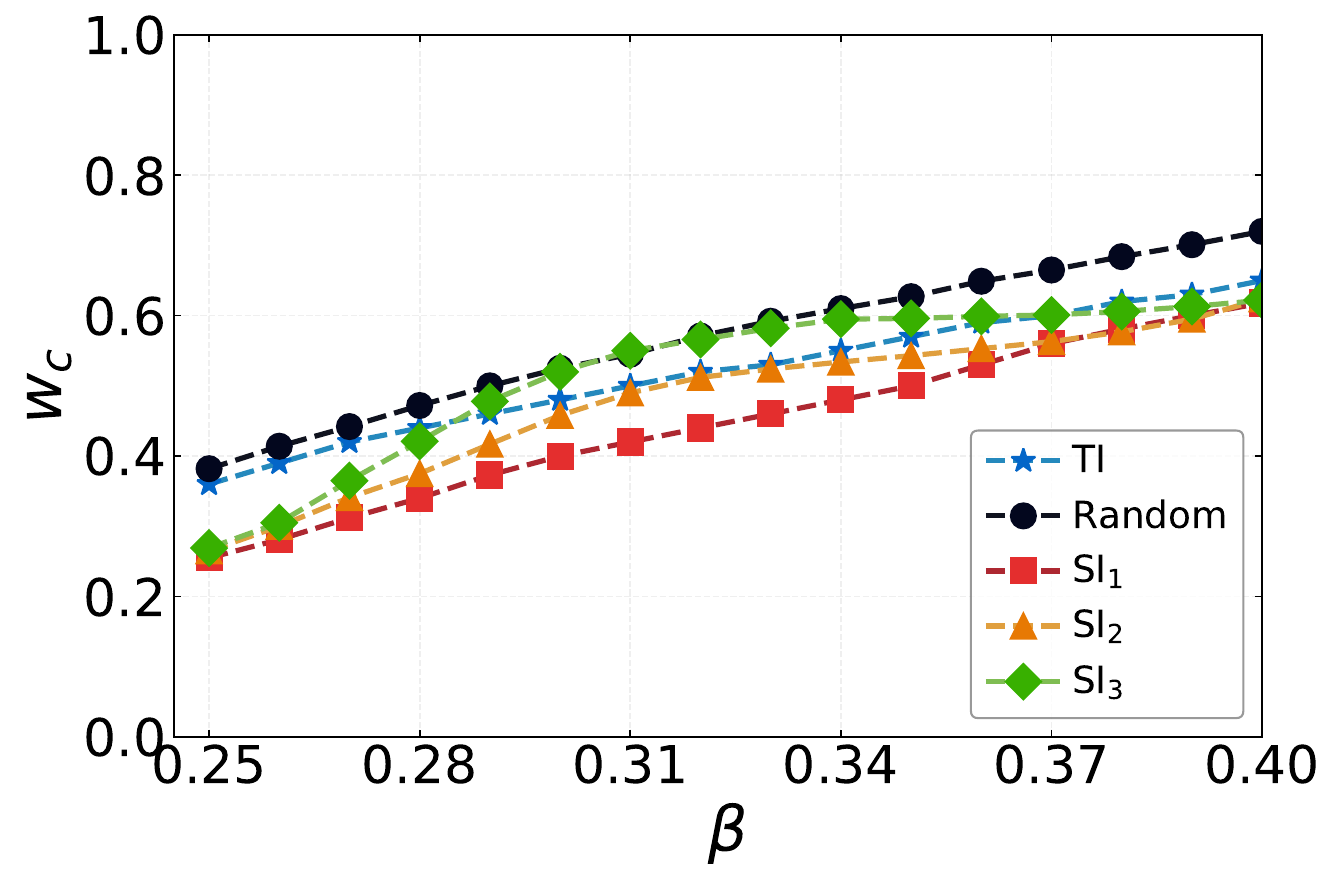}
        \caption{$I(0) = 0.3$}
        \label{fig:HIT_a}
    \end{subfigure}
    \hfill
    \begin{subfigure}[b]{0.48\textwidth}
        \centering
        \includegraphics[width=\textwidth]{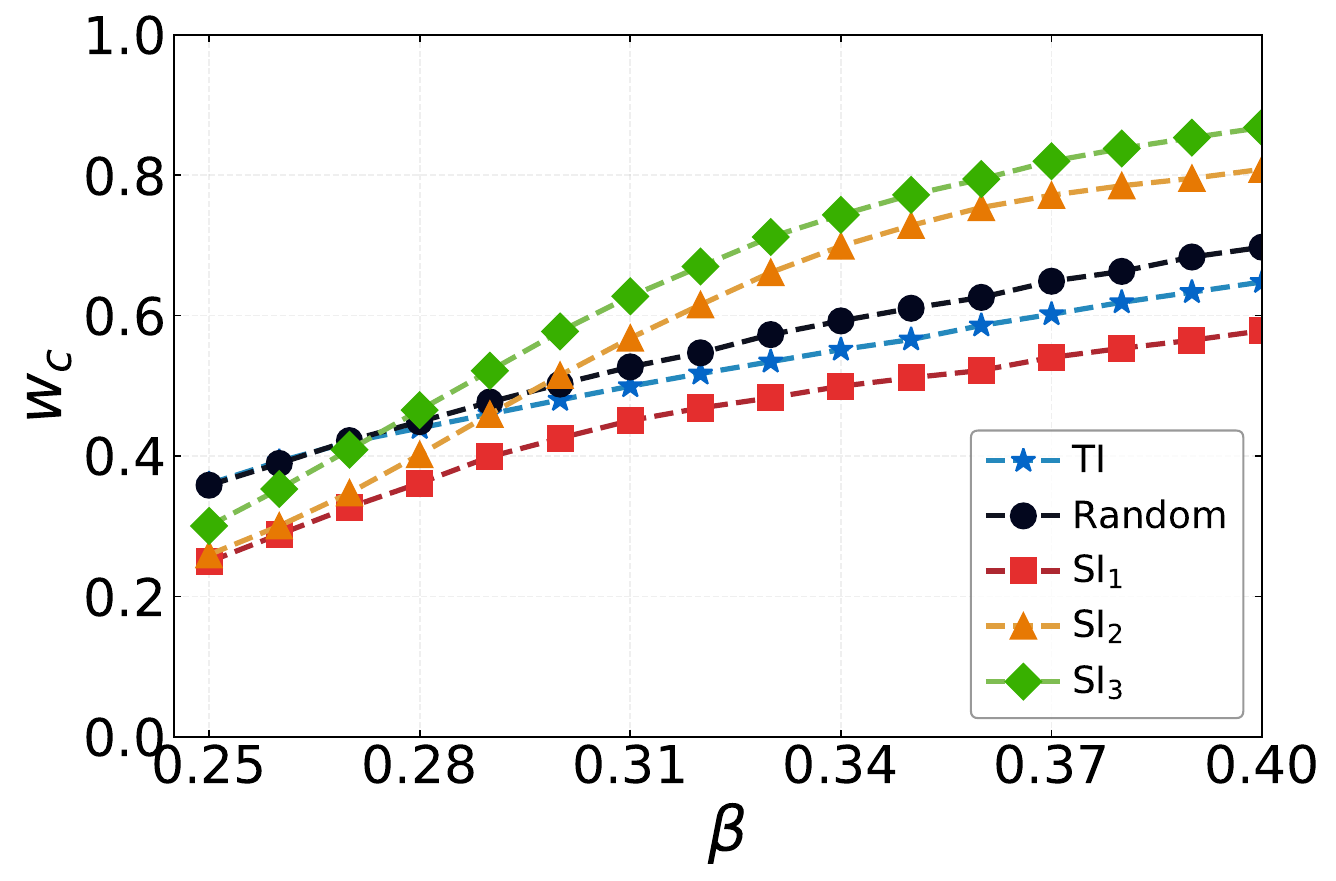}
        \caption{$I(0) = 0.5$}
        \label{fig:HIT_b}
    \end{subfigure}
    
    \caption{
        \textbf{Herd immunity threshold $w_c$ under different immunization strategies.} 
        The critical immunization fraction $w_c$, defined as the minimum proportion of immunized hyperedges required to eliminate epidemics, is shown as a function of the infection rate $\beta$ for two initial infection levels: (a) $I(0) = 0.3$ and (b) $I(0) = 0.5$. 
        Results are compared across Targeted immunization (TI), three variants of Spontaneous isolation (SI), and Random edge immunization (Random), represented by solid curves with distinct colors and markers. 
        The average hyperdegree is set to $\langle k \rangle = 6$, and all other parameters follow the baseline configuration.
    }
    \label{fig_4}
\end{figure}

\subsection{Characterization of adaptive hyperedge dynamics}		
 
To systematically examine how adaptive hyperedge behavior shapes epidemic outcomes in hypergraphs, we analyze the joint effects of initial conditions and adaptive parameters on outbreak dynamics and stationary-state prevalence through time-series comparisons and heatmaps in Fig.~\ref{fig_2}.

Figs.~\ref{fig_2}(a) and (d) depict the temporal evolution of infection density $\rho^I(t)$ under varying initial conditions and adaptive parameters. Interestingly, as shown in panel (a), for high sensitivity $\eta = 0.8$, a large initial infection density $I(0) = 0.8$ leads to disease extinction, whereas a moderate initial prevalence $I(0) = 0.2$ results in endemic persistence. This counter-intuitive behavior manifests as bistability and hysteresis phenomena, where the system exhibits two stable stationary states, disease-free and endemic, for identical parameter values, with the final outcome determined solely by initial conditions. The bistable regime arises from the adaptive feedback between node states and hyperedge activity. Specifically, high initial infection density induces strong infection pressure $\Phi_e(t)$ on hyperedges, which under high sensitivity rapidly suppresses hyperedge activity $\Theta_e(t)$, ultimately extinguishing the outbreak through self-induced topological isolation.
Moreover, 
transient oscillations arise from delayed feedback between infection growth and structural adaptation, producing damped oscillations that highlight the rich phenomenology of adaptive dynamics in higher-order networks.

Furthermore,
the related parameter analysis in Fig.~\ref{fig_2}(b)--(f) quantifies these adaptive mechanisms.
Heatmaps (b) and (c) reveal that increasing the sensitivity parameter $\eta$ and initial infection density $I(0)$ systematically suppresses endemic prevalence, effectively expanding the disease-free region in the heatmaps. Conversely, heatmaps (e) and (f) demonstrate that a higher spontaneous recovery rate $\gamma$ leads to increased epidemic prevalence by accelerating hyperedge reactivation. These opposing effects reflect the dual role of adaptive dynamics, as strong infection-induced suppression with high sensitivity $\eta$ provides natural containment, while rapid structural recovery with $\gamma$ maintains transmission potential.  

\subsection{Evaluation of immunization strategies}		

Next, to evaluate the performance of hyperedge-level immunization strategies in adaptive hypergraphs, we compare the stationary-state infection density $\rho^I$ as a function of the immunization rate $w$ across structural settings $\langle k \rangle $ and adaptive parameters $\gamma$. The results are displayed in Fig.~\ref{fig_3}. As shown in Fig.~\ref{fig_3}(a)–(b), all strategies perform similarly to random immunization at small $w$, but once $w$ exceeds a critical threshold, $\rho^I$ drops sharply, reaching herd immunity at a lower $w_c$ than the random baseline.

In dense networks with faster recovery shown in Figs.~\ref{fig_3}(b) and (d), SI\textsubscript{2} and SI\textsubscript{3} outperform other strategies under limited budgets ($w < 0.6$). This effect occurs because these strategies preferentially remove highly active hyperedges, which have relatively low instantaneous infection pressure but play an important role as potential transmission pathways in highly connected systems.  
Specifically, the higher threshold of SI\textsubscript{3} allows it to quickly block epidemiologically active hyperedges at low budgets, explaining its early advantage. 
As the budget grows, however, the expansion of immunization to less active hyperedges is delayed by SI\textsubscript{3}'s descending activity-based ranking, which postpones intervention on those hyperedges with the highest infection pressure but severely suppressed activity. 
These residual hyperedges sustain endemic circulation and prevent full elimination. 
In contrast, SI\textsubscript{1} targets such high-risk hyperedges from the start, achieving lower stationary prevalence at sufficiently large budgets.
When $w$ increases further, TI and SI\textsubscript{1} become more effective since they directly target hyperedges with high infection pressure or lower activity levels, respectively. 
SI\textsubscript{2} and SI\textsubscript{3} continue to prioritize hyperedges with high activity but low infection pressure, leading to diminishing returns in the later stage.
These results reveal a key trade-off in adaptive hypergraphs: proactive isolation of potential transmission hubs enhances resilience under resource constraints, whereas targeted interventions guided by infection pressure are crucial for complete epidemic elimination.

Then, we examine how different immunization strategies influence the herd immunity threshold (HIT) in adaptive hypergraphs.
This set of simulations quantifies the way in which transmission intensity, initial epidemic prevalence, and behavioral thresholds jointly determine the minimal intervention effort required to halt an outbreak in higher-order adaptive networks.
As shown in Fig.~\ref{fig_4}, the HIT \(w_c\) increases monotonically with the infection rate \(\beta\) for every strategy, confirming that stronger transmission demands a larger fraction of immunized hyperedges to prevent endemic persistence.
More importantly, the performance gap separating the strategic interventions (TI and SI\textsubscript{1}) from random immunization widens substantially at higher transmission rates, for instance when \(\beta\) exceeds \(0.31\).
This divergence underscores the rapidly diminishing returns of non-strategic approaches once the epidemic becomes more difficult to contain.

Among all methods examined, SI\textsubscript{1} consistently attains the lowest HIT.
Its effectiveness stems from its ability to identify and disable hyperedges whose activity has been severely suppressed by high local infection pressure, precisely those groups that act as persistent reservoirs of infection.
In Fig.~\ref{fig_4}(b), TI delivers a comparable performance, especially at moderate infection rates, while SI\textsubscript{2} and SI\textsubscript{3} require substantially higher thresholds.
The weaker performance of these latter strategies reflects their slower responsiveness to early signs of activity suppression, which delays the removal of genuinely high-risk hyperedges.
The comparison between Figs.~\ref{fig_4}(a) and (b) further reveals that raising the initial infection level from \(I(0)=0.3\) to \(I(0)=0.5\) elevates the HIT across all strategies, yet the relative advantage of SI\textsubscript{1} is preserved.
This upward shift occurs because a higher initial prevalence triggers stronger endogenous reductions in hyperedge activity, allowing spontaneous isolation to work in concert with the built-in adaptive feedback.
The resulting cooperative coupling amplifies the impact of isolating high-risk hyperedges, delivering improved containment efficiency even when the system starts from a considerably more adverse epidemiological state. This finding reveals that herd immunity in adaptive hypergraphs is not a static network property but an emergent outcome of interacting epidemic, behavioral, and intervention dynamics, and that effective control therefore demands strategies which couple transmission dynamics with adaptive structural interventions.

\begin{figure}[ht!]
    \centering
    \hspace*{1.9em}
    \begin{subfigure}[b]{0.43\textwidth}
        \centering
        \includegraphics[width=\textwidth]{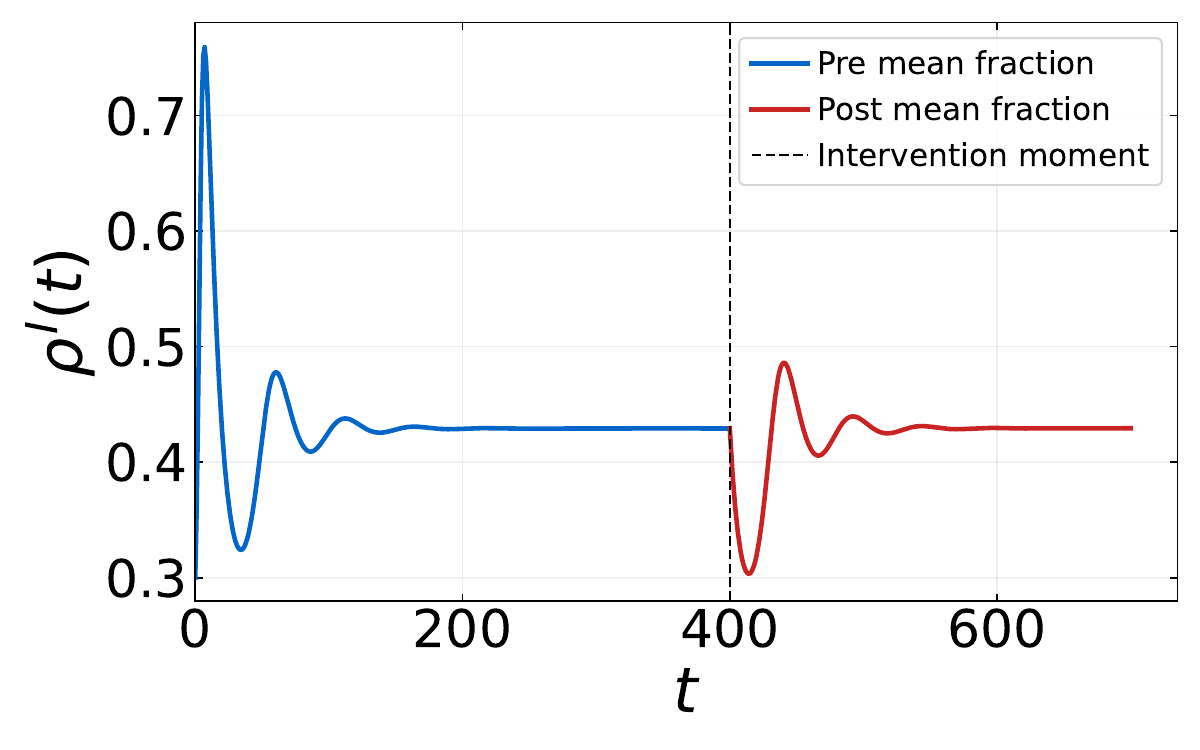}
        \caption{}
        \label{}
    \end{subfigure}
    \hfill
    \begin{subfigure}[b]{0.49\textwidth}
        \centering
        \includegraphics[width=\textwidth]{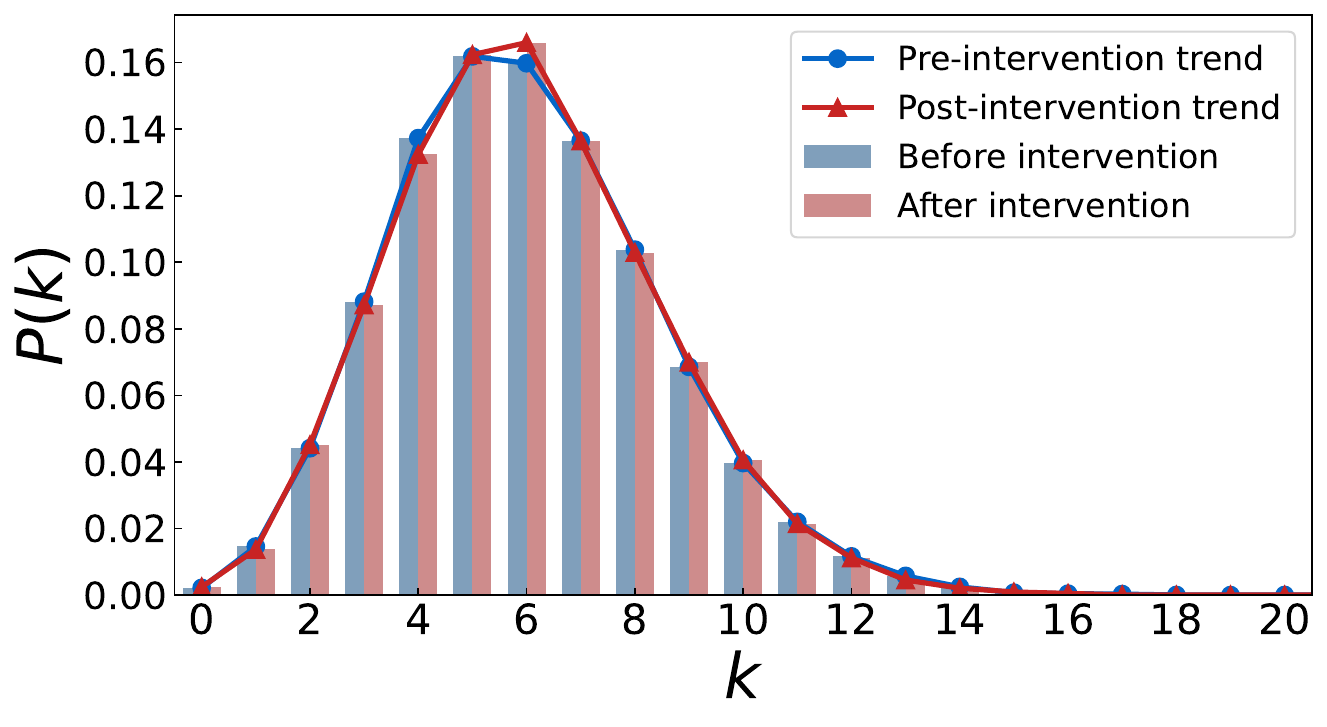}
        \caption{}
        \label{}
    \end{subfigure}
    
    \hspace*{1.9em}
    \begin{subfigure}[b]{0.43\textwidth}
        \centering
        \includegraphics[width=\textwidth]{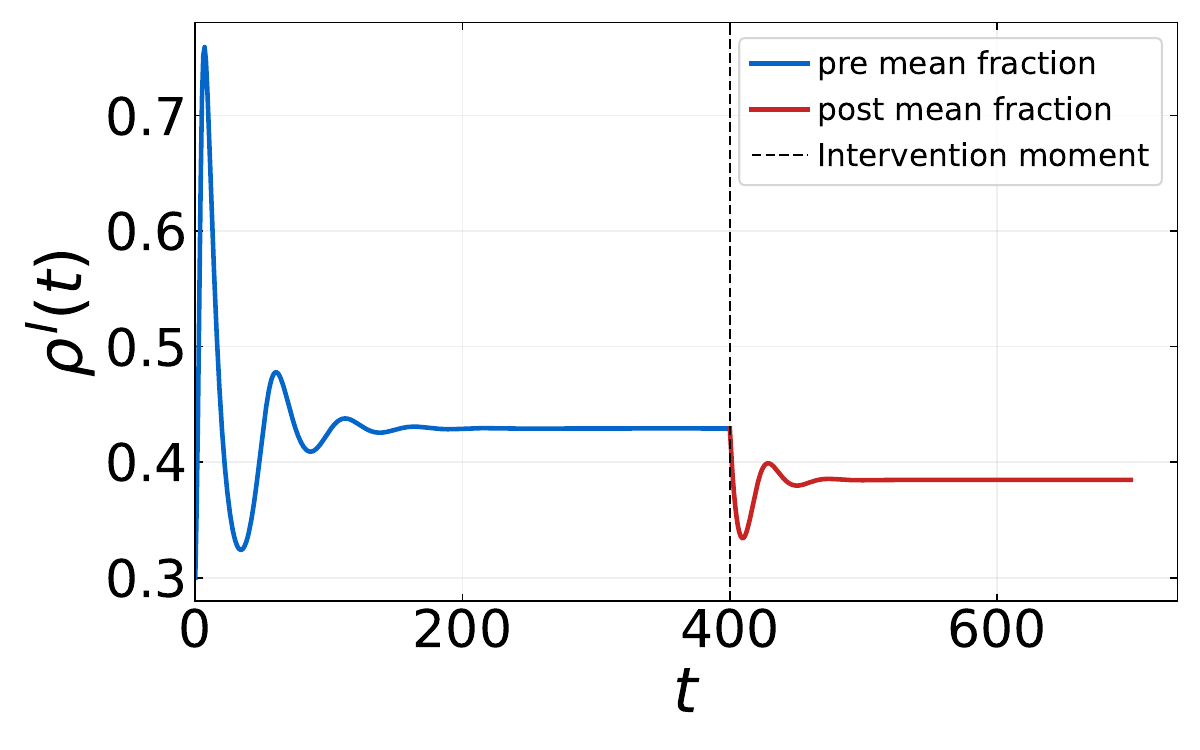}
        \caption{}
        \label{}
    \end{subfigure}
    \hfill
    \begin{subfigure}[b]{0.49\textwidth}
        \centering
        \includegraphics[width=\textwidth]{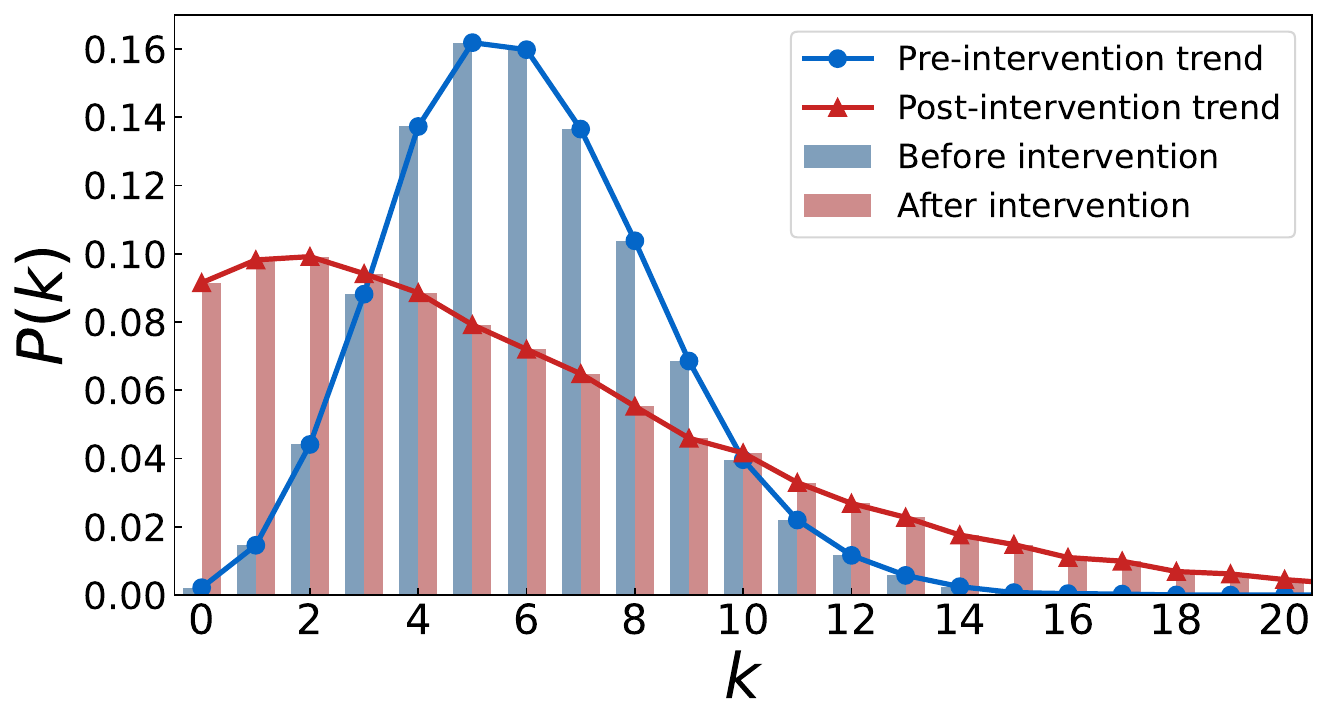}
        \caption{}
        \label{}
    \end{subfigure}
    
    \caption{
       \textbf{Impact of rewiring mechanisms on epidemic dynamics and network topology.} 
       Panels (a) and (c) show the time evolution of infection density under random rewiring and degree-preferential rewiring, respectively, following targeted immunization applied at the intervention moment.
       Panels (b) and (d) compare the degree distributions $P(k)$ before and after random rewiring and degree-preferential rewiring. At time \(t = 400\), the system is perturbed from its quasi-stationary state by removing a fraction \(w = 0.8\) of hyperedges through the TI strategy, after which the two rewiring mechanisms are applied independently in panels (a) and (c).
       Blue curves represent infection dynamics before intervention, and red curves represent dynamics after intervention. The infection rate is fixed at \(\beta = 0.25\), the initial infection density is set to \(I(0) = 0.3\), and all other parameters are kept the same across simulations.
      }
    \label{fig_5}
\end{figure}

\subsection{Assessment of rewiring mechanisms}	
	
To investigate whether structurally guided resumption of group interactions can sustain the control achieved by prior immunization, we apply two classic rewiring mechanisms, random rewiring and degree-preferential rewiring, after the system has reached a quasi-stationary state under targeted immunization.
These rewiring rules model the reopening of social contacts and the re‑establishment of group activities in a post‑outbreak recovery phase, allowing us to ask whether an unstructured return to normal connectivity leads to epidemic resurgence and whether a structured recovery can keep prevalence at a lower level.
The assessment is based on both the temporal evolution of infection density and the changes in degree distribution, and the results are presented in Fig.~\ref{fig_5}.

As illustrated in Fig.~\ref{fig_5}, the two rewiring mechanisms produce qualitatively different epidemic trajectories.
Under random rewiring, shown in panels (a) and (b), the intervention induces only a temporary reduction in infection density.
After this brief suppression, the epidemic rapidly returns to its pre‑intervention level.
Random reconnection restores overall connectivity without structurally blocking transmission pathways, so alternative spreading routes re‑emerge and hyperedge activity gradually recovers, allowing the infection to re‑establish itself.

In contrast, degree‑preferential rewiring, shown in panels (c) and (d), yields a lasting reduction in infection density.
The system remains in a controlled regime well below both the baseline level and the level observed under random rewiring.
The degree distribution provides a structural explanation for this persistence.
Random rewiring keeps the network close to a Poisson‑like pattern, whereas degree‑preferential rewiring creates a heavy‑tailed distribution that reshapes the transmission landscape.
The heavy-tailed degree distribution stems directly from the intrinsic bias of the preferential rewiring kernel, which amplifies existing degree heterogeneity regardless of the infection state.
This heterogeneous structure generates high infection pressure on the few most connected nodes, which in turn triggers stronger adaptive suppression of their incident hyperedges and drives the system toward lower endemic prevalence.
Hence, even in the recovery phase following an outbreak, strategic organisation of contacts and activities remains essential for preventing epidemic resurgence.

\subsection{Validation on empirical higher-order collaboration networks}	
\begin{figure*}[ht!]
    \centering
    \begin{subfigure}[b]{0.325\textwidth}
        \centering
        \includegraphics[width=\textwidth]{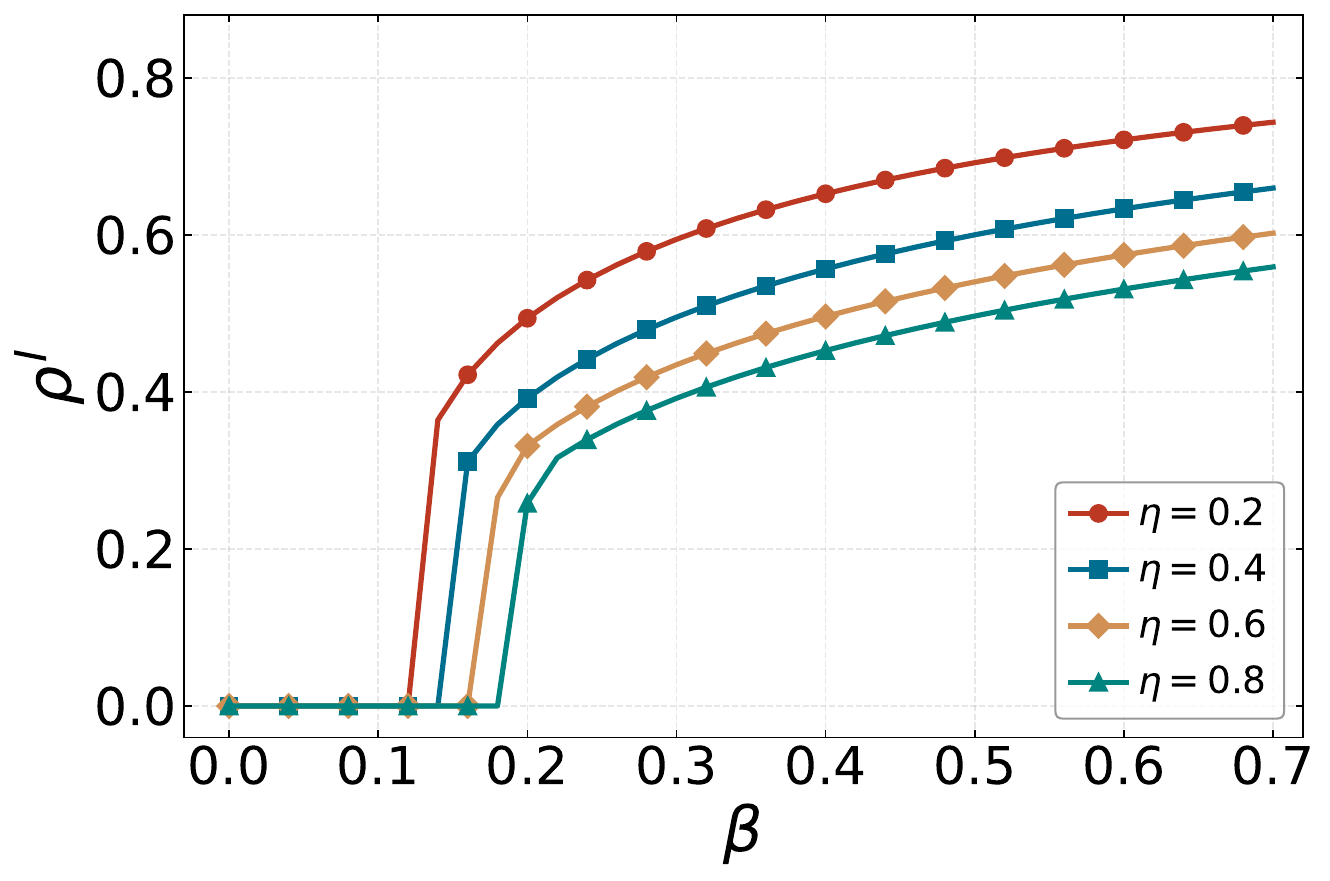}
        \caption{}
        \label{fig:real_a}
    \end{subfigure}
    \hfill
    \begin{subfigure}[b]{0.325\textwidth}
        \centering
        \includegraphics[width=\textwidth]{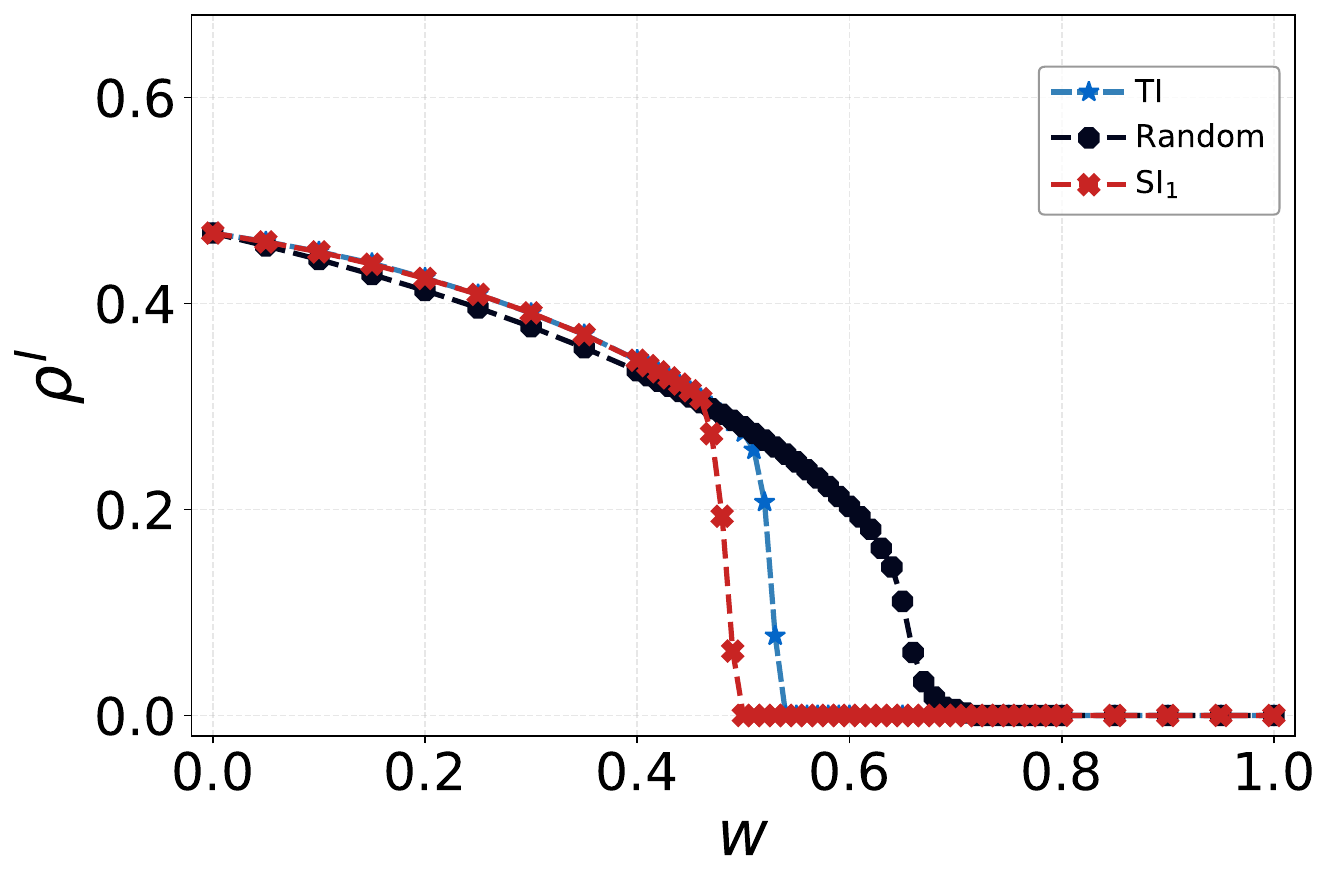}
        \caption{}
        \label{fig:real_b}
    \end{subfigure}
    \hfill
    \begin{subfigure}[b]{0.325\textwidth}
        \centering
        \includegraphics[width=\textwidth]{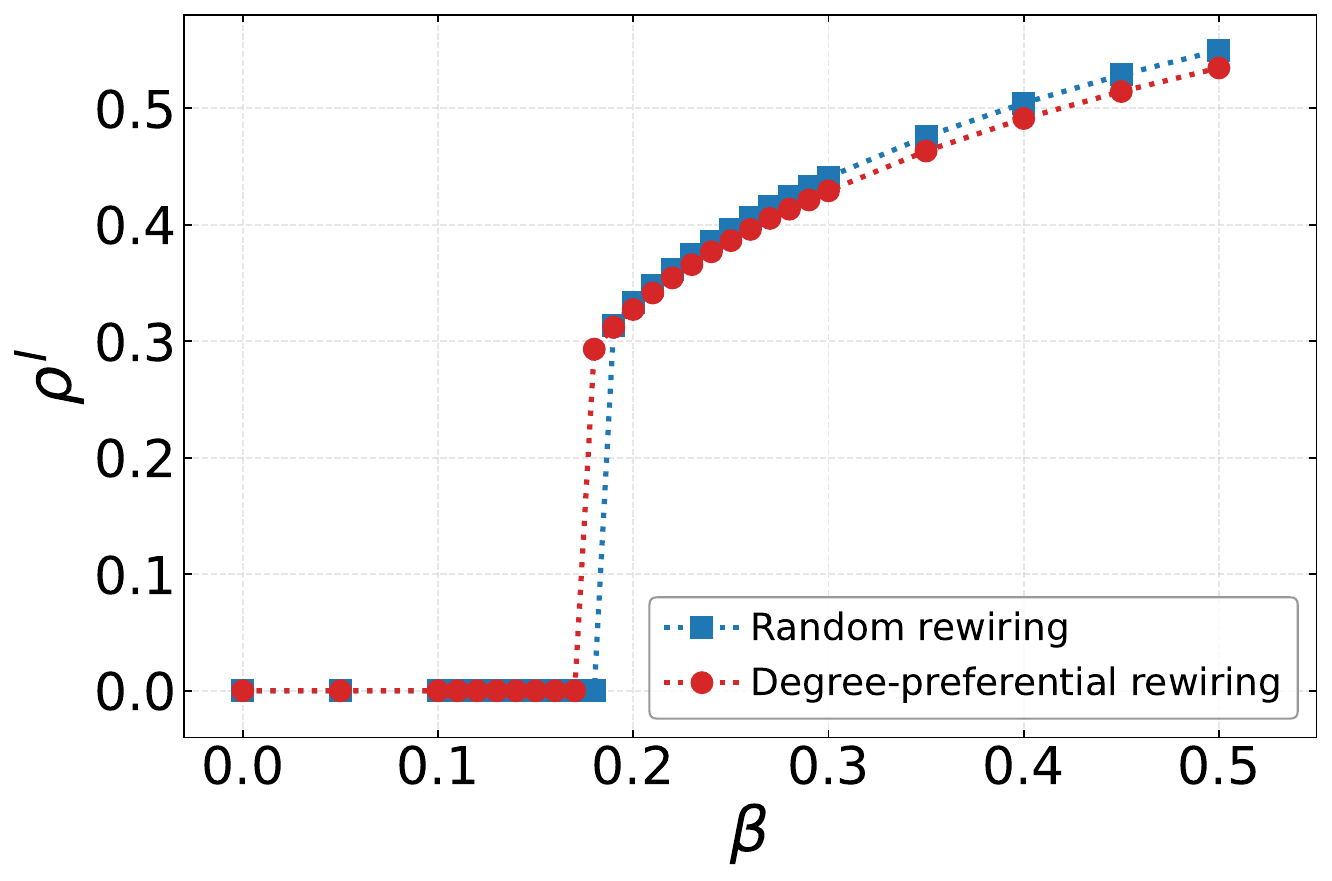}
        \caption{}
        \label{fig:real_c}
    \end{subfigure}
    
    \caption{
        \textbf{Performance of the proposed model and intervention strategies on the congressional bill cosponsorship hypergraph:}
        (a) Stationary infection density $\rho^I$ as a function of the infection rate $\beta$ under different sensitivity parameters $\eta$.
        (b) Comparison of different immunization strategies among targeted immunization (TI), random edge immunization (Random), and spontaneous isolation with threshold $\theta_{\min}=0.2$ (SI$_1$).
        (c) Comparison of stationary infection density under random rewiring and degree-preferential rewiring mechanisms after targeted immunization.
       All simulations use $\gamma = 0.03$, $\mu = 0.1$, and initial infection density $I(0) = 0.3$.}
    \label{fig_real_data}
\end{figure*}

To assess the applicability of the proposed model to real-world settings with heterogeneous group sizes, we further test it on an empirical hypergraph constructed from congressional bill cosponsorship data~\cite{James}.
The dataset contains 1218 nodes, each representing a US congressperson, and 8736 hyperedges corresponding to bills jointly cosponsored in a given year.
The hyperedges vary substantially in size, with an average of 10.4 cosponsors per bill and the largest bill involving 52 cosponsors.
Each hyperedge thus captures a natural group interaction among the legislators who collaborated on a bill, producing a network with widely varying interaction sizes.

The results, summarized in Fig.~\ref{fig_real_data}, confirm that the proposed model and intervention strategies remain effective beyond the synthetic uniform hypergraph setting.
As shown in Fig.~\ref{fig_real_data}(a), consistent with the findings on synthetic hypergraphs, higher values of the sensitivity parameter \(\eta\) suppress the endemic state by rapidly reducing hyperedge activity in response to elevated infection pressure, thereby limiting group interactions and expanding the disease‑free region.
The comparison of immunization strategies in Fig.~\ref{fig_real_data}(b) reveals that spontaneous isolation (SI\(_1\)) remains highly effective on the general hypergraph with heterogeneous edge sizes, outperforming targeted immunization (TI) and showing a substantial margin over random edge immunization.
This indicates that even in realistic higher‑order contact patterns, locally triggered deactivation of highly infected groups can serve as a robust containment mechanism.
The contrast between the two post‑intervention rewiring strategies, represented in Fig.~\ref{fig_real_data}(c), shows that degree‑preferential rewiring achieves a modest but consistent reduction in stationary prevalence relative to random rewiring.
Under the current setup, new hyperedges are generated with a size equal to the mean edge size of the original network, which is approximately \(10\).
The presence of isolated nodes and small hyperedges of size \(2\) or \(3\) in the original network constrains the structural advantage of degree‑preferential rewiring, as these low‑degree nodes provide limited opportunities for the emergence of the heavy‑tailed degree distributions that drive adaptive suppression.
This observation suggests that further refinements of the rewiring rules, possibly incorporating infection‑aware or size-dependent mechanisms, may be needed for general hypergraphs.

\section{Conclusion}		
\label{sec:conclusion}

In this work, we establish an adaptive framework for epidemic dynamics on adaptive hypergraphs by integrating simplicial SIS processes with adaptive hyperedge activity, supported by MMCA analysis and quasistationary Monte Carlo simulations. Our results demonstrate that higher-order interactions fundamentally reshape epidemic spreading through adaptive feedback, giving rise to discontinuous phase transitions, bistability, and hysteresis. The theoretical analysis further reveals nonlinear epidemic thresholds in genuinely higher-order settings, where the disease-free state remains stable against infinitesimal perturbations and endemic states emerge only via saddle-node bifurcations.
Moreover, intervention strategies that exploit the adaptive structure exhibit marked improvements in containment performance. Targeted immunization based on infection-pressure ranking and low-threshold spontaneous isolation consistently outperforms random baselines, while degree-preferential rewiring achieves long-term suppression by inducing heterogeneous topological patterns that reinforce adaptive feedback. In contrast, random rewiring provides only short-lived benefits due to the rapid restoration of transmission pathways.

Despite these advances, the present framework assumes uniform adaptive parameters and a single-layer representation, overlooking real-world heterogeneity.
In particular, all hyperedges share the same sensitivity and recovery rates, whereas actual groups likely differ in their responsiveness, potentially yielding richer dynamics such as coexisting endemic and disease-free subgroups.
However, since the IBMF framework tracks node probabilities independently, our model can naturally be extended to more general structures---for example, by introducing size-dependent transmission rates $\beta_d$ and a hyperdegree distribution $P(k_d)$ for hypergraphs with heterogeneous hyperedge sizes, or by accounting for all infection-triggering subsets within simplices in simplicial complexes.
Such extensions introduce additional nonlinear terms while preserving the overall MMCA structure.

In summary, our findings highlight the central role of higher-order interactions, adaptive connectivity, and targeted interventions in shaping epidemic dynamics and enhancing controllability on adaptive hypergraphs. Beyond the macroscopic behaviors uncovered in our analysis, the simulations show that highly connected nodes accumulate disproportionately high infection pressure, which in turn triggers stronger adaptive suppression of their incident hyperedges. This localized suppression cascades through the system and significantly amplifies the effectiveness of interventions that specifically target structurally vulnerable or behaviorally responsive groups. Building on these insights, future work could incorporate information-driven decision rules in which awareness, perceived risk, or digital signals dynamically modulate hyperedge activity or integrate optimal intervention scheduling under limited resources.
Such developments would deepen the theoretical understanding of adaptive contagion and support the design of more effective public health policies in complex social systems.

\singlespacing
\setlength\bibsep{0pt}

\end{document}